\newtheorem{theorem}{Theorem}
\newtheorem{lemma}{Lemma}
\newtheorem{cor}{Corollary}
\newtheorem{defn}{Definition}
\begin{document}
		\title{Identifiability Conditions for Compressive Multichannel Blind Deconvolution}
	\author{Satish~Mulleti, {\it  Member, IEEE}, Kiryung Lee, {\it Senior Member, IEEE}, and Yonina C. Eldar, {\it Fellow, IEEE}
	
		\thanks{\scriptsize S. Mulleti and Y. C. Eldar are with the Faculty of Math and Computer Science, Weizmann Institute of Science, Israel. K. Lee is with the Department of Electrical and Computer Engineering at The Ohio State University. Email: mulleti.satish@gmail.com, lee.8763@osu.edu, yonina.eldar@weizmann.ac.il}
	\thanks{\scriptsize This project has received funding from the Benoziyo Endowment Fund for the Advancement of Science, Estate of Olga Klein -– Astrachanthe; European Union’s Horizon 2020 research and innovation program under grant No. 646804-ERC-COG-BNYQ; and the Israel Science Foundation under grant no. 0100101. K. Lee was supported in part by NSF under grant CCF 17-18771.}
}
	\markboth{DRAFT; }%
	{Shell \MakeLowercase{\textit{et al.}}: Bare Demo of IEEEtran.cls for Journals}
	\maketitle
	
	
	\begin{abstract}
	In applications such as multi-receiver radars and ultrasound array systems, the observed signals can often be modeled as a linear convolution of an unknown signal which represents the transmit pulse and sparse filters which describe the sparse target scenario. The problem of identifying the unknown signal and the sparse filters is a sparse multichannel blind deconvolution (MBD) problem and is in general ill-posed. In this paper, we consider the identifiability problem of sparse-MBD and show that, similar to compressive sensing, it is possible to identify the sparse filters from compressive measurements of the output sequences.  Specifically, we consider compressible measurements in the Fourier domain and derive identifiability conditions in a deterministic setup. Our main results demonstrate that $L$-sparse filters can be identified from $2L^2$ Fourier measurements from only two coprime channels. We also show that $2L$ measurements per channel are necessary. The sufficient condition sharpens as the number of channels increases asymptotically in the number of channels, it suffices to acquire on the order of $L$ Fourier samples per channel. We also propose a kernel-based sampling scheme that acquires Fourier measurements from a commensurate number of time samples. We discuss the gap between the sufficient and necessary conditions through numerical experiments including comparing practical reconstruction algorithms. The proposed compressive MBD results require fewer measurements and fewer channels for identifiability compared to previous results, which aids in building cost-effective receivers.

	\end{abstract}
	
	\begin{IEEEkeywords}
		Sparse multichannel blind deconvolution, identifiability, deterministic sparsity model,  subsampling, blind gain and phase calibration \end{IEEEkeywords}
	\IEEEpeerreviewmaketitle

	\section{Introduction}
In a wide range of applications, an unknown signal is observed through multiple channels. The output signal in each channel is given as the linear convolution of the unknown signal and the filter corresponding to the impulse response of the channel. The problem of identifying both the unknown signal and the filters is known as multichannel blind deconvolution (MBD). In general, this problem is ill-posed. It can be solved by imposing models on the source and the filters. In this paper, we consider the sparse-MBD problem, where the filters are assumed to be sparse. 

Sparse-MBD models arise in many practical applications such as radar imaging \cite{bajwa_radar, bar_radar}, seismic signal processing \cite{filho201seismic}, room impulse response modeling \cite{rip}, sonar imaging \cite{carter_sonar}, and ultrasound imaging \cite{eldar_sos,eldar_beamforming}, where a transmit signal is observed through multiple receivers after reflecting from sparsely located targets. The filters indicate the locations of the targets relative to the position of the receivers. Typically, the transmit signal is assumed to be known, however, in practice, it is often distorted while transmission and propagation \cite{bhandari_unknownsource}. Hence, the output signals from the receivers can be modeled within the sparse MBD framework.

In the aforementioned applications, the implementation cost is determined by the number of receivers (or equivalently the number of channels) and the computational cost is governed by the length of the output sequences. Hence, it is desirable to identify the MBD problem from a minimal number of channels and minimal number of samples per channel. We study the problem of identifying the sparse filters from fewer measurements of the output sequences compared to their ambient dimension, which we call as \emph{compressive MBD}. We show that compressive MBD is possible by combining the deterministic MBD approach developed for the non-sparse case \cite{xu_kilath} and the sparse signal identifiability results from compressive sensing framework \cite{eldar_2015sampling}. 

Any blind deconvolution linear measurements of the output signal suffers from shift and scaling ambiguity. Having redundant observations through multiple channels does not remove this fundamental ambiguity. Hence, the identifiability of the MBD problem is considered within fundamental ambiguity class. Even though the fundamental ambiguities are acceptable, in general, MBD is still an ill-posed problem and can be solved only by imposing additional conditions on the source and the filters. Several MBD results have been discussed in the literature where the source and the filters are assumed to have different structures in addition to the assumption that the filters have finite impulse responses (FIRs).
	
During the 90s several identifiability results and reconstruction algorithms for MBD have been presented, largely in the context of blind channel identification where the goal is to uniquely identify the filters \cite{ tong_fd, tong_td, moulines_det,gurelli1995evam, xu_kilath,  hua_wax, tong_survey}. The methods are classified as statistical \cite{tong_fd,tong_td,moulines_det,gurelli1995evam} or deterministic \cite{xu_kilath,  hua_wax}, depending on whether statistics of the source signal is used to identify the unknown filters (cf. \cite{tong_survey} for a comprehensive review of classical MBD results). In the statistical framework, it has been shown that an MBD problem is identifiable up to the fundamental ambiguities of scaling and shift if the source is zero-mean and white random process, and the filters are deterministic and coprime \cite{tong_survey}.
A set of sequences are coprime if their $z$-transform do not share any common zeros except the zeros at $z=0$. In this framework, first, the second-order statistics of the output sequences are estimated; the filters are then estimated from their statistics. The estimation accuracy of these approaches depends on how well the source statistics is known a priori and how accurately the second-order statistics are estimated from the available data. In applications where the source statistics may vary over time or is difficult to estimate from limited data, deterministic approaches are preferred.   

Xu et al. \cite{xu_kilath} developed a deterministic MBD approach for estimating FIR channels from their outputs to an unknown deterministic sequence where the filters are FIR with length $M_x$ (without any further sparsity condition). Starting from $3M_x$ truncated convolutive measurements  Xu et al. \cite{xu_kilath} showed that the filters  are uniquely identifiable under the following two conditions: i) the filters are coprime; ii) the linear complexity of the source, within the observation interval of the measurements, is greater than twice the order of the filters. The linear complexity of any sequence is a measure of its predictability and is given by the minimum number of exponentials which the signal consists of. In this paper, we show that the filters under the same FIR model are identifiable from $2M_x-1$ linear measurements with less restrictive condition on the source sequence compared to the linear complexity condition.
	
During the last decade, there has been renewed interest in MBD and, particularly in blind gain and phase calibration (BGPC) problems with sparsity and subspace constraints (e.g., \cite{yce_sensor,xia_li,balzano_nowak,cosse_mbd,wang_chi,lee_17,li2016optimal,ling_strohmer2,li2017blind,lee2018spectral,lee2018fast,ahmed_2018, gribonval_dl}). A BGPC is a bilinear inverse problem arising in a multi-sensor or multi-receiver system where the objective is to determine the unknown gains and phases of the sensing system as well as the unknown observed signals from the sensors. It can be shown that the Fourier-domain formulation of an MBD problem is a special case of BGPC. In this case, the unknown gains and phases are given by the Fourier coefficients of the common source sequence and the unknown observed signals are Fourier transforms of the filters. Recent results analyzed the case of uniform samples in the Fourier domain under the assumption that the sparse filters are random or generic \cite{balzano_nowak, wang_chi, cosse_mbd, lee_17}. In these works, it is assumed that the output sequences are obtained by the circular convolution between the source and the sparse filters. Assuming that all the output samples are available and the filters are random and sparse, identifiability results have been derived in terms of a sufficient number of channels. Specifically, in \cite{wang_chi} and \cite{lee_17}, it is shown that the sparse MBD problem is uniquely identifiable provided that the filter coefficients are modeled as independent and identically distributed Bernoulli-Gaussian random variables and all $M$ output samples are available from $N = \mathcal{O}(M \log^4 M)$ channels. In addition, they have shown that sparse MBD and the corresponding BGPC problem can be solved by a practical algorithm \cite{wang_chi, li2017blind}.

BGPC can also be posed as a blind dictionary calibration (BDC) problem where the goal is to recover the calibration weights for a known dictionary together with the sparse vectors. Gribonval et al. \cite{gribonval_dl} showed that a BDC problem can be posed as a convex optimization problem and a solution can be achieved by using off-the-shelf optimization solvers. However, identifiability results are not derived in \cite{gribonval_dl}. We will apply the algorithm in \cite{li2017blind} and an alternate minimization approach for the problem formulation in \cite{gribonval_dl} to compressive MBD in Section \ref{sec:simulations}.

There are three major limitations in applying the existing results \cite{lee_17, wang_chi, cosse_mbd} to our setting.
\begin{enumerate}
    \item The sparsity was introduced to solve sparse-MBD from fully observed output sequences. Their results do not apply to the case where it is enforced or desired to identify the signals from partial observations. 

    \item Their identifiability result has been derived with the number of channels increasing in the signal length. This sufficient condition is conservative in the sense that sparse-MBD can be solved empirically with fewer channels, that is, only two channels. 
	    
    \item Their analysis of sparse-MBD assumed that the filters follow certain stochastic models, which are not relevant to practical applications of our interest. Therefore their results do not apply due to the model mismatch. 
	
\end{enumerate}

Our main results, summarized below, overcome the above limitations in the existing results on sparse MBD. 

We present a set of identifiability results on sparse MBD that apply uniformly to any instance satisfying given constraints. In other words, unlike some of the recent results relying on certain stochastic models, our identifability results are purely deterministic. We show that the sparsity constraint enables compressive-MBD similar to compressive sensing. Specifically, it is possible to recover the filters and the source from a small number of Fourier measurements. 

In the sparse and deterministic setup, we consider two sub-problems. The first is to uniquely identify only the filters. Such a problem is useful in the radar and sonar applications where the filters contain the information about the targets and the source need not be identified. We combine the ideas of Xu et al. \cite{xu_kilath} and compressive sensing to derive identifiability results. By applying a cross-convolution approach, the identifiability problem reduces to the recovery of a superposition of convolutions of sparse filters from their partial Fourier measurements. Our main results follow by applying the full spark property of partial Fourier matrices and the coprimeness condition of the filters. We show that to identify the filters, two channels ($N=2$) are sufficient. When $N=2$, taking $2L^2$ Fourier samples per channel are sufficient. Furthermore, we show that the problem is not uniquely identifiable from fewer than $2L$ measurements per channel. For $N \gg 2$, we demonstrate a gain and establish that on average (averaged over the number of channels) we need order $L$ measurements per channel. 

Second, we consider the simultaneous identification of both the source and the filters. Using the results of filter identifiability, we propose a pairwise measurement strategy where we consider different Fourier measurements from each unique pair of channels. Starting from $N \geq 2$ channels, we show that $2L^2$ Fourier measurements are sufficient from $\max \{\frac{M_s-L^2-1}{L^2-1}, 2\}$ channels and $2L$ from the rest. Here $M_s$ is length of the source. We discuss practical algorithms to identify the source and the filters for the two-channel case \cite{li2017blind, gribonval_dl}. By applying these algorithms, we discuss the gap between the necessary and sufficient conditions through simulations. 
	
Our main results are derived by assuming that only partial Fourier measurements of the output sequences are available. However, in certain applications, the output sequences can be measured only in the time domain. We propose a sampling-kernel based technique that computes partial Fourier measurements without accessing all time samples. Then we obtain the analogous identifiability result from time-domain samples. 

We also specialize our results to the non-sparse FIR case. 
Our frequency-domain approach requires fewer measurements compared to the classical time-domain approach by Xu et al. \cite{xu_kilath}. Moreover, our approach is guaranteed when the Fourier transform of the source signal does not vanish at the observed frequencies, which is a milder condition than the analogous condition on the linear complexity of the source \cite{xu_kilath}. 

The rest of the paper is organized as follows. In the next section, we present the problem formulation along with relevant mathematical preliminaries. In Section \ref{sec:compressive_measurements} we show how to achieve compressive Fourier measurements from a finite set of time-domain samples. Identifiability results for compressive MBD are discussed in Section \ref{sec:results}. In Section \ref{sec:comparision}, we present a detailed comparison of the proposed results with the recent sparse and classical non-sparse results. Simulations are shown in Section \ref{sec:simulations} followed by the proof of the main results in Section \ref{sec:proof}. 

Throughout the paper, we use the following notations. For a positive integer $M$, let $[M]$ denote the set $\{0,1,\dots,M-1\}$. For a sequence $x$, its support denoted by $\mathrm{supp}(x)$ is defined by $\{k \in \mathbb{Z} ~|~ x[k] \neq 0\}$. The $\ell_0$ pseudo-norm of $x$, denoted by $\|x\|_0$, counts the number of its nonzero elements. the $z$-transform and the discrete-time Fourier transform (DTFT) of $x$ are denoted by $X(z)$ and $X(e^{\mathrm{j}\omega})$ respectively.

\section{Compressive MBD from Fourier Measurements}
In this section, we formulate the compressive MBD problem and discuss the assumptions made to deriving the main results. 

\subsection{Problem Statement}
\label{sec:problem_statement}

Let $y_1,\dots,y_N$ denote multichannel output sequences from a common source sequence $s$. Let $x_1,\dots,x_N$ denote the filters corresponding to the impulse responses of the channels. Then 
\begin{equation}
\label{eq:genmdl_yn}
    y_n = s \mathop{\ast} x_n, \quad n = 1,\dots,N,
\end{equation}
where $\ast$ denotes the linear convolution.  
The MBD problem is to identify $s$ and $\{x_n\}_{n=1}^N$ from the output sequences $\{y_n\}_{n=1}^N$. 

Let 
\begin{equation}
\label{eq:ambiguitu_class}
\begin{aligned}
    \mathcal{C}\left(s,\{x_n\}_{n=1}^N\right) =& \Big\{  \left(\alpha^{-1} \mathcal{S}_{-m}(s), \left\{\alpha \mathcal{S}_m(x_n)\right\}_{n=1}^N \right) \big| \\
    & \alpha \neq 0, ~ m \in \mathbb{Z} \Big\}
\end{aligned}
\end{equation}
denote the orbit of $(s, \{x_n\}_{n=1}^N)$ by the actions of shift and scaling, where $\mathcal{S}_m$ denotes a shift operator that maps a sequence to its shifted version by $m$ samples. Then any element in $\mathcal{C}(s,\{x_n\}_{n=1}^N)$ generates the same output sequences. Therefore, we aim to identify $(s, \{x_n\}_{n=1}^N)$ up to the \emph{fundamental-ambiguity class} given by \eqref{eq:ambiguitu_class}, that is, to find any element in $\mathcal{C}(s,\{x_n\}_{n=1}^N)$ that satisfies \eqref{eq:genmdl_yn}. Unique identification hereafter will be referred to as this case. 

When all time-samples of $y_n$ are available, one can uniquely identify the filters (resp.  both the source and the filters) by the method by Xu et al. \cite{xu_kilath} (resp. recent sparse MBD methods by \cite{lee_17, wang_chi, cosse_mbd}) provided that the assumed conditions on $s$ and $\{x_n\}_{n=1}^N$ therein are satisfied (cf. Section \ref{sec:comparision} for details). 

The main question of our interest is whether one can deconvolve $s$ and $\{x_n\}_{n-1}^N$ from the compressive measurements of $\{y_n\}_{n=1}^N$, which will be referred to as compressive MBD.

We specifically consider partial Fourier measurements given as the DTFT of $y_n$ at selected frequencies, that is, the set of linear measurements is written as  
\begin{align}
\{ Y_n(e^{\mathrm{j}k\omega_0}) ~|~ k \in \mathcal{K}_n, ~ n=1,\dots,N \},
\label{eq:new_mes0}
\end{align}
where $\mathcal{K}_n$ denotes the index set $ \{m_{n,1}, m_{n,2}, \dots, m_{n,K_n}\} \subset \mathbb{Z}$ for $n=1,\dots,N$. 
We are particularly interested in the setting when $|\mathcal{K}_n| \ll M$ where $M$ is length of $y_n$.

Our choice of the partial Fourier measurements is motivated by the following two reasons. First, in the Fourier-domain, the measurements in \eqref{eq:new_mes0} are written as the product of the Fourier transforms of the source and the filters, that is, 
\begin{align}
\hspace{-.09in}Y_n(e^{\mathrm{j}k\omega_0}) = S(e^{\mathrm{j}k\omega_0}) X_n(e^{\mathrm{j}k\omega_0}), ~ k \in \mathcal{K}_n, ~ n=1,\dots,N.
\label{eq:new_mes}
\end{align}
The entrywise product form in \eqref{eq:new_mes} allows the flexibility to design sampling patterns. For any $n_1 \neq n_2$ and $k \in \mathcal{K}_{n_1} \cap \mathcal{K}_{n_2}$, we have
\[
Y_{n_2}(e^{\mathrm{j}k\omega_0}) X_{n_1}(e^{\mathrm{j}k\omega_0})
= 
Y_{n_1}(e^{\mathrm{j}k\omega_0}) X_{n_2}(e^{\mathrm{j}k\omega_0}),
\]
which enables to apply the cross-convolution approach by Xu et al. \cite{xu_kilath} even with sampling in the Fourier domain for any $\{\mathcal{K}_n\}_{n=1}^N$. 
Second, due to the uncertainty principle, the Fourier transforms of the filters are well spread in the Fourier domain as they are sparse in the time domain. This helps recover the filters from fewer Fourier coefficients.

In order to uniquely identify the solution to compressive MBD, we impose the following structural assumptions on the filters $\{x_n\}_{n=1}^N$ and the source $s$: 
\begin{description}
\item[(A1)] \textbf{Sparse filters:} $\|x_n\|_0 \leq L$ and $\text{supp}\{x_n\} \subset [M_x]$ for $n = 1, \dots, N$. 
\item[(A2)] \textbf{Finite-length source:} $s$ is supported within $[M_s]$.
\item[(A3)] \textbf{Coprime filters:} $X_1(z),\dots,X_N(z)$ do not share any common zeros except at $z=0$. 
\item[(A4)] \textbf{Non-vanishing source:} $S(e^{\mathrm{j}k\omega_0}) \neq 0$ for all $k \in \cup_{n=1}^N \mathcal{K}_n$. 
\item[(A5)] \textbf{Universal sampling:} The sampling interval $\omega_0$ and index sets $\{\mathcal{K}_n\}_{n=1}^N$ form the universal sampling sets. Such sets are defined in Section \ref{sec:universal_set}.
\end{description}
A few remarks on these assumptions are in order. 
\begin{itemize}
    \item (A1) and (A2) imply that each $y_n$ is supported on $[M]$, where $M = M_x+M_s-1$.
    \item (A2) is not necessary if one only concerns the identification of the filters. 
    \item (A3) is a necessary condition for unique identifiability without (A1) and (A2). See Section \ref{sec:coprime} for more details.
    \item (A4) avoids the case where the Fourier measurements at a sampled frequency are zero for all channels. 
    \item (A5) enables to solve sparse MBD from compressive Fourier measurements.
\end{itemize}

Unique identification in compressive MBD is then defined as follows.
\begin{defn}[Identifiability of Compressive MBD]
\label{def:identifiability}
Compressive MBD is uniquely identifiable if any feasible solution $(\hat{s}, \{\hat{x}_n\}_{n=1}^N)$, which satisfies (A1)-(A4) and is consistent with the measurements in \eqref{eq:new_mes0}, belongs to the fundamental ambiguity class $\mathcal{C}\left(s, \{x_n\}_{n=1}^N\right)$ of the ground-truth signals $(s,\{x_n\}_{n=1}^N)$ defined in \eqref{eq:ambiguitu_class}.
\end{defn}

Our objective is to derive necessary and sufficient conditions on the index sets $\{\mathcal{K}_n\}_{n=1}^N$ such that either only the filters $\{x_n\}_{n=1}^N$ or both the source $s$ and the filters are uniquely identifiable according to Definition~\ref{def:identifiability} under (A5). 

Simultaneous identification of both the source and the filters requires extra conditions, which go beyond universal sampling (see Section \ref{ssec:source_recovery}). We first derive conditions for the unique identification of the filters followed by those extra conditions for the identification of the source signal. Note that recovery of the filters is equivalent to the simultaneous recovery of both the filters and the DTFT of the source signal at the observed frequencies. Therefore partial identifiability is as follows. 

\begin{defn}[Partial Identifiability Only for Filters]
\label{def:partial_identifiability}
Compressive MBD is partially identifiable if for any feasible solution $(\hat{s}, \{\hat{x}_n\}_{n=1}^N)$, there exists $\tilde{s}$ such that
\begin{enumerate}
    \item $\tilde{S}(e^{\mathrm{j}k\omega_0}) = \hat{S}(e^{\mathrm{j}k\omega_0})$ for $k \in \cup_{n=1}^N \mathcal{K}_n$. 
    \item $(\tilde{s},\{\hat{x}_n\}_{n=1}^N) \in \mathcal{C}\left(s, \{x_n\}_{n=1}^N\right)$.
\end{enumerate}
\end{defn}

\subsection{Universal Sets} \label{sec:universal_set}
Universal sampling sets have been introduced for compressed sensing from partial Fourier measurements (e.g., see \cite[Def. 14.1]{eldar_2015sampling}). The partial Fourier measurement matrix corresponding to the $n$th channel measurements in \eqref{eq:new_mes0} is given as a $|\mathcal{K}_n| \times \bar{M}$ Vandermonde matrix $\mathbf{V}_n$ whose $(k,m)$th element is given as $e^{\mathrm{j}km\omega_0}$. In our settings, it is satisfied that $\bar{M}\geq |\mathcal{K}_n|$, where $\bar{M} = \max\{2M_x-1,M_s\}$ (See Section \ref{ssec:source_recovery} for details). To avoid aliasing, $\omega_0$ is chosen such that the elements of the set $\{e^{m\omega_0}\}_{m=0}^{\bar{M}-1}$ are distinct. For each $n \in \{1,\dots,N\}$, the index set $\mathcal{K}_n$ is called \emph{universal} if every submatrix of $\mathbf{V}_n$ obtained by taking $|\mathcal{K}_n|$ columns has full rank \cite[Def. 14.1]{eldar_2015sampling}, that is, $\mathbf{V}_n$ has full spark \cite{eldar_cs_book}. Note that the universal sets depend on the frequency interval $\omega_0$.

For example, if $\mathcal{K}_n$ is a set of consecutive integers and $\omega_0 = 2\pi/\bar{M}$ then each index set $\mathcal{K}_n$ is universal. Various alternative constructions of universal sets have been studied (e.g.,  \cite{tao2005uncertainty,mishali2009blind,alexeev_spark,achanta_spark}, also see \cite{eldar_2015sampling}).

\subsection{Coprimeness of the Filters}
\label{sec:coprime}
In an MBD framework, unless any further restriction is imposed on the supports of the source and the filters, the coprime condition on the filters is a necessary for the unique identification of the solution. To elaborate, let the filters $\{x_n\}_{n=1}^N$ share nontrivial common zeros in the $z$-domain. Specifically, each filter can be decomposed as
\begin{align}
x_n = h_0*\hat{x}_n,
\label{eq:coprime}
\end{align}
where the sequence $h_0$ contains the common zeros except at $z=0$ and $\hat{x}_n$ is the novel factor. As a result, the outputs $\{y_n = s*x_n\}_{n=1}^N$ can also be decomposed as 
\begin{equation}
\label{eq:coprime2}
\begin{aligned}
y_n = s*x_n
= s*h_0*\hat{x}_n = \hat{s}*\hat{x}_n,
\end{aligned}
\end{equation}
where $\hat{s} =s*h_0$ and $\{\hat{x}_n\}_{n=1}^N$ provide an alternative solution that produces the same outputs. Hence, without any assumptions on the source and the filters, coprimeness is a necessary condition. With (A1) and (A2), it is no longer a necessary condition as the alternative solution may not satisfy these assumptions. However, we keep the coprimeness assumption to our settings to derive the identifiability conditions.

\section{Compressive MBD from Time Domain Measurements}
\label{sec:compressive_measurements}

In this section we propose a method that acquires the compressive Fourier measurements of an FIR sequence without explicitly observing the entire sequence. The number of time samples can be as small as the number of Fourier measurements. Our approach is inspired by the kernel-based sampling and reconstruction approach for finite-rate-of-innovation (FRI) signals \cite{eldar_sos, mulleti_kernal}. It has been shown that Fourier measurements of FRI signals can be computed from time samples at a sub-Nyquist rate by applying a suitable sampling kernel. Then the parameters of FRI signals can be computed from Fourier measurements on a grid. 
	
\begin{figure}
\centering
\includegraphics[width = 2.8in]{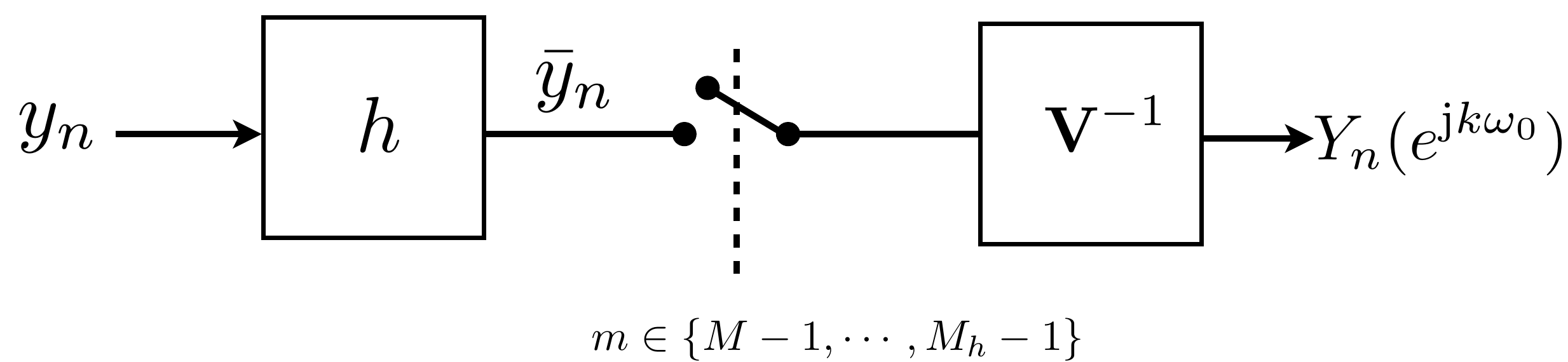}
\caption{Compressive measurements of Fourier samples of $y_n$ by using discrete-SOS filter $h$: the sequence $y_n$ of dimension of $M$ is passed through an FIR kernel of length $M_h$ defined as in \eqref{eq:sos}. The output sequence $\bar{y}_n$ has length $M_h+M-1$ out of which few measurements are taken by closing the switch at sample indices $m \in \{M-1, \dots, M_h-1\}$. From the truncated samples $Y_n(e^{\mathrm{j}k\omega_0})$ is computed by inverting a linear system of equations governed by matrix $\mathbf{V}$ considered in Section \ref{sec:universal_set}}.
\label{fig:sos}
\end{figure}
	
Let $y$ be an FIR signal supported on $[M]$ and $\mathcal{K} \subset \mathbb{Z}$ be a finite set. The DTFT coefficients $Y(e^{\mathrm{j}k\omega_0})$ of $y$ for $k \in \mathcal{K}$ are computed from $|\mathcal{K}|$ consecutive time samples of $y$ with an appropriate sampling kernel. 
	
Let $h$ be an FIR filter supported on $[M_h]$, where $M_h > M$, which satisfies
\begin{align}
h[m] = \sum_{k\in \mathcal{K}} e^{\mathrm{j}k\omega_0 m}, \quad m \in [M_h].
\label{eq:sos}
\end{align}
The filter $h$ in $\eqref{eq:sos}$ is a discrete-time analog of the sum-of-sincs (SOS) filter proposed by Tur et al. \cite{eldar_sos}. The filtered version of $y$ by $h$, denoted $\bar{y}$, satisfies
\begin{equation}
\label{eq:sos1}
\bar{y}[m] = \sum_{p=0}^{M-1} y[p] h[m-p] =\sum_{k \in \mathcal{K}}e^{\mathrm{j}k \omega_0 m}\, Y(e^{\mathrm{j}k \omega_0 })
\end{equation}
for $M-1\leq m \leq M_h-1$. Then the vectors respectively containing $\{Y(e^{\mathrm{j}k \omega_0})\}_{k \in \mathcal{K}}$ and $\{\bar{y}[m]\}_{m=M-1}^{M_h-1}$ are related through a Vandermonde matrix $\mathbf{V}$ of size $(M_h-M+1) \times |\mathcal{K}|$ with its $(m,k)$th element given as $e^{\mathrm{j}mk\omega_0}$. Therefore if $\omega_0$ and $\mathcal{K}$ form the universal set then the condition $M_h\geq M+|\mathcal{K}|-1$ ensures that $\{Y(e^{\mathrm{j}k \omega_0})\}_{k \in \mathcal{K}}$ are computed uniquely from $\{\bar{y}[m]\}_{m=M-1}^{M_h-1}$. In other words, $|\mathcal{K}|$ observations of the $\bar{y}$ are sufficient to compute $|\mathcal{K}|$ Fourier measurements of $y$ without observing the entire sequence. 

A schematic of the sampling mechanism is shown in Fig.~\ref{fig:sos}. The switch is closed for the samples at $m \in \{M-1, \dots, M_h-1\}$. If we further assume that $\mathcal{K}$ is a universal set, any $|\mathcal{K}|$ time samples of $\bar{y}$ from $[M-1,M_h-1]$ will make the resulting matrix $\mathbf{V}$ full rank. This enables optimizing the sampling pattern to improve the condition number of $\mathbf{V}$.

\section{Identifiability of Compressive MBD}
\label{sec:results}
We consider a two-step approach to the compressive MBD problem. The first step identifies only the filters corresponding to the impulse responses of the channels, similarly to blind channel estimation in communications (e.g., \cite{hua_wax, xu_kilath}). Once the filters are identified, then the second step reconstructs the common input source to the channels. 
	
\subsection{Identifying Sparse Filters}
\label{ssec:filter}
We identify the filters from the Fourier measurements in \eqref{eq:new_mes} under the assumptions (A1) to (A5) except (A2). The following theorem presents the result in the two-channel case.	
 
\begin{theorem}[Partial Identifiability of Compressive MBD]
\label{theorem1}
{Suppose that (A1), (A3), (A4), and (A5) hold with $\mathcal{K}_1 = \mathcal{K}_2 = \mathcal{K}$ and $L< \sqrt{M_x}$.
\begin{enumerate}
    \item If $|\mathcal{K}| \geq 2L^2$, then compressive MBD is partially identifiable from the Fourier measurements according to Definition~\ref{def:partial_identifiability}.
    \item If $|\mathcal{K}| < 2L$, then compressive MBD is not partially identifiable. 
\end{enumerate}
}
\end{theorem}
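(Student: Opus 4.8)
The plan is to exploit the cross-convolution identity of Xu et al.\ in the Fourier domain, reduce partial identifiability to an \emph{exact} time-domain cross-convolution by way of the full-spark property, and then invoke coprimeness twice. Let $(\hat{s},\{\hat{x}_n\}_{n=1}^2)$ be any feasible solution, so it satisfies (A1), (A3), (A4) and reproduces the data: $\hat{S}(e^{\mathrm{j}k\omega_0})\hat{X}_n(e^{\mathrm{j}k\omega_0}) = S(e^{\mathrm{j}k\omega_0})X_n(e^{\mathrm{j}k\omega_0})$ for $k\in\mathcal{K}$, $n=1,2$. Multiplying the $n=1$ relation by $X_2$, the $n=2$ relation by $X_1$, and canceling the nonzero factor $\hat{S}(e^{\mathrm{j}k\omega_0})$ (guaranteed by (A4) for the candidate) yields
\[
\hat{X}_1(e^{\mathrm{j}k\omega_0})X_2(e^{\mathrm{j}k\omega_0}) = \hat{X}_2(e^{\mathrm{j}k\omega_0})X_1(e^{\mathrm{j}k\omega_0}), \quad k\in\mathcal{K}.
\]
This is the Fourier-domain form of the cross-convolution $\hat{x}_1\ast x_2$ versus $\hat{x}_2\ast x_1$. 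I would then set $d := \hat{x}_1\ast x_2 - \hat{x}_2\ast x_1$, which is supported on $[2M_x-1]$ and, being a difference of convolutions of $L$-sparse sequences, obeys $\|d\|_0 \le 2L^2$; the displayed identity states exactly that the DTFT of $d$ vanishes on $\{k\omega_0 : k\in\mathcal{K}\}$, i.e.\ $\mathbf{V}d = 0$ for the $|\mathcal{K}|\times(2M_x-1)$ partial-Fourier matrix $\mathbf{V}$.

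Next I would force $d=0$. Since $L<\sqrt{M_x}$ gives $2L^2 \le 2M_x-1 = \bar{M}$, the universal-sampling hypothesis (A5) makes $\mathbf{V}$ full spark, so $\mathrm{spark}(\mathbf{V}) = |\mathcal{K}|+1$. Because $\|d\|_0 \le 2L^2 \le |\mathcal{K}| < \mathrm{spark}(\mathbf{V})$, the only such sparse null vector is $d=0$, giving the exact identity $\hat{X}_1(z)X_2(z) = \hat{X}_2(z)X_1(z)$ as $z$-transforms. I would then use (A3) on the ground truth: every zero of $X_1$ away from $z=0$ is a zero of $\hat{X}_1$ (as $X_2$ does not vanish there), so $X_1$ divides $\hat{X}_1$ up to a monomial and $\hat{X}_1 = G(z)X_1(z)$, $\hat{X}_2 = G(z)X_2(z)$ for a common Laurent polynomial $G$. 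Applying (A3) to the candidate filters then forces $G$ to have no zero off the origin—any such zero would be a nontrivial common zero of $\hat{X}_1,\hat{X}_2$—so $G(z)=c\,z^{-m_0}$ and $\hat{x}_n = c\,\mathcal{S}_{m_0}x_n$. Taking $\tilde{s} = c^{-1}\mathcal{S}_{-m_0}s$ places $(\tilde{s},\{\hat{x}_n\})$ in $\mathcal{C}(s,\{x_n\})$; since coprime $\hat{x}_1,\hat{x}_2$ cannot both vanish at any $e^{\mathrm{j}k\omega_0}$, consistency at the sampled frequencies yields $\tilde{S}(e^{\mathrm{j}k\omega_0})=\hat{S}(e^{\mathrm{j}k\omega_0})$, completing Definition~\ref{def:partial_identifiability}.

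For the necessity claim (part 2) I would exhibit a non-identifiable instance. Pick a support set $T\subset[M_x]$ of size $|\mathcal{K}|+1\le 2L$; by full spark the $|\mathcal{K}|\times(|\mathcal{K}|+1)$ submatrix of $\mathbf{V}$ on $T$ has a one-dimensional null space spanned by a vector $v$ whose $|\mathcal{K}|+1$ entries are all nonzero. Splitting $T = T_1\sqcup T_2$ with $|T_1|\neq|T_2|$ and $|T_i|\le L$, I set $x_1 = v|_{T_1}$ and $\hat{x}_1 = -v|_{T_2}$: these are distinct $L$-sparse filters on $[M_x]$ that agree in DTFT on $\mathcal{K}$ (because $x_1-\hat{x}_1 = v$ lies in the null space) and cannot be scaled shifts of one another since $\|x_1\|_0\neq\|\hat{x}_1\|_0$. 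Choosing an $L$-sparse $x_2$ coprime to both $x_1$ and $\hat{x}_1$, and an $s$ with $S(e^{\mathrm{j}k\omega_0})\neq0$ on $\mathcal{K}$, the two instances $(s,x_1,x_2)$ and $(s,\hat{x}_1,x_2)$ both satisfy (A1)--(A4) and produce identical measurements; keeping $x_2$ fixed forces any ambiguity relation to be trivial, so $(\hat{x}_1,x_2)$ is not a common scaled shift of $(x_1,x_2)$ and the instance violates Definition~\ref{def:partial_identifiability}.

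I expect the main obstacle to be the passage in part~1 from ``DTFTs agree on $\mathcal{K}$'' to the exact sequence identity: this requires bounding the cross-difference $d$ simultaneously in sparsity ($\le 2L^2$) and support ($[2M_x-1]$) and matching these against the full-spark threshold, which is exactly what pins the $2L^2$ sufficient count and consumes the hypothesis $L<\sqrt{M_x}$. The ensuing coprimeness argument is structurally clean but demands careful bookkeeping of the trivial zeros at $z=0$ to isolate the shift $m_0$ and certify that $G$ is a monomial.
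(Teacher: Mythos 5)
Your sufficiency argument is essentially identical to the paper's proof (of Lemma~1): cross-multiply the measurement identities, cancel the nonvanishing source spectra using (A4), form the cross-convolution difference $d=\hat{x}_1 \ast x_2 - \hat{x}_2 \ast x_1$ supported on $[2M_x-1]$ with $\|d\|_0 \le 2L^2$, kill it with the full-spark property of the universal partial Fourier matrix, and then apply coprimeness twice---first to the ground-truth pair to get $\hat{X}_n(z) = G(z)X_n(z)$, then to the candidate pair to force $G$ to be a monomial. Your closing step, recovering $\tilde{S}(e^{\mathrm{j}k\omega_0}) = \hat{S}(e^{\mathrm{j}k\omega_0})$ from the fact that coprime $\hat{x}_1,\hat{x}_2$ cannot both vanish at a point on the unit circle, is a detail the paper leaves implicit but that Definition~\ref{def:partial_identifiability} genuinely requires; good that you supplied it.

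On necessity you take a slightly different (and in one respect cleaner) route. The paper perturbs \emph{both} filters, producing pairs $(x_1,\hat{x}_1)$ and $(x_2,\hat{x}_2)$ with matching partial Fourier data, and then asserts $q \neq 0$; you instead keep $x_2$ fixed and perturb only $x_1$, which makes exclusion from the ambiguity class immediate ($\hat{x}_2 = x_2$ forces $\alpha = 1$, $m_0 = 0$, hence $\hat{x}_1 = x_1$, a contradiction) and lets you enforce (A3) explicitly by choosing $x_2$ coprime to both $x_1$ and $\hat{x}_1$ (e.g., a single spike, whose $z$-transform has no zeros)---two points the paper glosses over. There is one slip: your requirement that $T = T_1 \sqcup T_2$ with $|T_1| \neq |T_2|$ and $|T_i| \le L$ is unsatisfiable in the boundary case $|\mathcal{K}| = 2L-1$, since then $|T| = |\mathcal{K}|+1 = 2L$ forces the equal split $L+L$. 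Fortunately the unequal-cardinality condition is superfluous in your argument: it was only there to rule out $x_1$ and $\hat{x}_1$ being scaled shifts of one another, but your fixed-$x_2$ observation already rules out the \emph{joint} ambiguity relation regardless of the sparsity counts, so an equal split works and the construction covers all $|\mathcal{K}| < 2L$.
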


Recall that $\{S(e^{\mathrm{j}\omega_0k})\}_{k \in \mathcal{K}}$, $x_1$, and $x_2$ were arbitrary in Definition~\ref{def:partial_identifiability}. Therefore, the partial identifiability in Theorem~\ref{theorem1} implies that for any instance within the assumed model, the filters are uniquely identified up to the ambiguity class. When it is not partially identifiable, there exists an instance where the filters are not uniquely determined.

Next, by combining the Fourier domain identifiability results of Theorem~\ref{theorem1} with the kernel-based measurement scheme proposed in Section~\ref{sec:compressive_measurements}, we obtain the following corollary, which provides the analogous results for compressive MBD from time-domain measurements.
	
\begin{cor}[Time-Domain Compressive MBD]
\label{cor:time}
Suppose that the hypotheses of Theorem~\ref{theorem1} and (A2) hold. Let $h$ be an FIR filter of length $|\mathcal{K}|$ with impulse response defined in \eqref{eq:sos}. Then compressive MBD is partially identifiable from $|\mathcal{K}|$ consecutive time-samples of $y_1 \ast h$ and $y_2 \ast h$, where the samples are indexed by the set $\{M-1,\dots, M+|\mathcal{K}|-2\}$, if $|\mathcal{K}|\geq 2L^2$. On the other hand, compressive MBD is not partially identifiable if $|\mathcal{K}|<2L$. 
\end{cor}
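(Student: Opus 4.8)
The plan is to reduce Corollary~\ref{cor:time} to Theorem~\ref{theorem1} by arguing that the prescribed time-domain samples carry exactly the same information as the Fourier measurements \eqref{eq:new_mes}, with the kernel-based identity \eqref{eq:sos1} of Section~\ref{sec:compressive_measurements} serving as the bridge. First I would use (A1) together with (A2) to pin down the support: since $\mathrm{supp}(x_n)\subset[M_x]$ and $\mathrm{supp}(s)\subset[M_s]$, each output $y_n = s\ast x_n$ is supported on $[M]$ with $M=M_x+M_s-1$. This is precisely the hypothesis under which \eqref{eq:sos1} is valid, and it is the sole role of (A2) here (it is absent from Theorem~\ref{theorem1}): it guarantees a finite observation window so that the SOS kernel $h$ of \eqref{eq:sos} localizes the relevant Fourier content within the samples indexed by $\{M-1,\dots,M+|\mathcal{K}|-2\}$.

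Second, I would record the resulting bijection. For each channel $n\in\{1,2\}$, filtering $y_n$ by $h$ and collecting $\{\bar{y}_n[m]\}_{m=M-1}^{M+|\mathcal{K}|-2}$ yields, via \eqref{eq:sos1}, a square linear system $\bar{\mathbf{y}}_n = \mathbf{V}\,\mathbf{Y}_n$, where $\mathbf{V}$ is the $|\mathcal{K}|\times|\mathcal{K}|$ Vandermonde matrix of Section~\ref{sec:universal_set} (the choice of $|\mathcal{K}|$ consecutive samples fixes $M_h=M+|\mathcal{K}|-1$, making $\mathbf{V}$ square) and $\mathbf{Y}_n$ stacks $\{Y_n(e^{\mathrm{j}k\omega_0})\}_{k\in\mathcal{K}}$. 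Because $\omega_0$ and $\mathcal{K}$ form a universal set by (A5), $\mathbf{V}$ is invertible, so the observed time samples and the Fourier measurements determine one another uniquely. Consequently, a candidate $(\hat{s},\{\hat{x}_n\})$ obeying (A1)--(A2) reproduces the observed time samples if and only if it reproduces $\{Y_n(e^{\mathrm{j}k\omega_0})\}_{k\in\mathcal{K}}$; feasibility in the time and Fourier domains coincides on this signal class.

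Third, I would conclude in both directions. For $|\mathcal{K}|\geq 2L^2$, one inverts $\mathbf{V}$ to recover the exact Fourier measurements and invokes Theorem~\ref{theorem1}(1), whose remaining hypotheses --- (A1), (A3), (A4), (A5), $\mathcal{K}_1=\mathcal{K}_2=\mathcal{K}$, and $L<\sqrt{M_x}$ --- are inherited verbatim, yielding partial identifiability in the sense of Definition~\ref{def:partial_identifiability}. For $|\mathcal{K}|<2L$, Theorem~\ref{theorem1}(2) supplies an alternative instance matching the Fourier measurements whose filters lie outside the ambiguity class $\mathcal{C}(s,\{x_n\}_{n=1}^N)$; by the bijection this instance also matches the time samples, hence is a time-domain-feasible solution witnessing non-identifiability.

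The hard part will be the necessity direction, specifically confirming that the Fourier-domain counterexample furnished by Theorem~\ref{theorem1}(2) can be realized as a genuine time-domain signal obeying (A1)--(A2), so that its filtered and truncated samples actually coincide with the observed ones. Concretely, the alternative source $\hat{s}$ must be supported on $[M_s]$ while attaining the prescribed values $\hat{S}(e^{\mathrm{j}k\omega_0})$ at the $|\mathcal{K}|$ sampled frequencies. Since $|\mathcal{K}|<2L$ is small relative to the source length and the sampling set is universal, such an $\hat{s}$ exists by interpolation, but this realizability is the one place where I expect to need an argument beyond a direct citation of Theorem~\ref{theorem1}; everything else follows from the invertibility of $\mathbf{V}$.
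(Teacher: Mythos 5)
Your proposal is correct and takes essentially the same route as the paper, which obtains Corollary~\ref{cor:time} precisely by this reduction: under (A1)--(A2) the $|\mathcal{K}|$ consecutive samples of $y_n \ast h$ are related to the Fourier measurements $\{Y_n(e^{\mathrm{j}k\omega_0})\}_{k\in\mathcal{K}}$ through the square Vandermonde matrix $\mathbf{V}$ of Section~\ref{sec:compressive_measurements}, which is invertible by (A5), so both directions of Theorem~\ref{theorem1} transfer verbatim. The only remark worth adding is that the ``hard part'' you flag is actually immediate: the counterexample constructed in the necessity part of Lemma~\ref{lemma1} leaves the source's Fourier values unchanged, so one may simply take $\hat{s}=s$ (which satisfies (A2) by hypothesis) instead of building $\hat{s}$ by interpolation, and the time-domain realizability follows with no extra argument.
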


The time-domain results are similar to the result in the context of FRI signal sampling where the number of Fourier coefficients needed for the identification governs the desired number of time samples and the sampling rate \cite{eldar_sos}.
	
The proof of Theorem~\ref{theorem1} directly follows from the results on the following feasibility problem:
\begin{equation}
\label{eq:optNEW}
\begin{array}{lll}
& \text{find} & \{\tilde{S}(e^{\mathrm{j}\omega_0k})\}_{k \in \cup_n \mathcal{K}_n}\quad \text{and}\quad ( \tilde{x}_1, \tilde{x}_2, \dots, \tilde{x}_N)  \\
& \text{s.t.} &  Y_n(e^{\mathrm{j}\omega_0k}) = \tilde{S}(e^{\mathrm{j}\omega_0k}) \tilde{X}_n(e^{\mathrm{j}\omega_0k}), ~ k \in \mathcal{K}_n, ~ \forall n,  \\
& & \text{(A1), (A3), (A4) and (A5) are satisfied,} 
\end{array}
\end{equation}
where the tilde is used to distinguish the variables of the problem from the corresponding ground-truth signals. If the solution to \eqref{eq:optNEW} is unique up to a scaling and shift ambiguity, then $(\{S(e^{\mathrm{j}\omega_0k})\}_{k \in\bigcup_{n=1}^N \mathcal{K}_n}, \{x_n\}_{n=1}^N)$ is uniquely identifiable, up to a shift and scaling ambiguity, from the measurements in \eqref{eq:new_mes}. The following lemma, whose proof is deferred to Section \ref{sec:proof}, provides necessary and sufficient conditions for the uniqueness of the feasibility problem in \eqref{eq:optNEW} when $N=2$. 

\begin{lemma}
\label{lemma1}
{Let $\omega_0$, $\mathcal{K}_1$, $\mathcal{K}_2$, and $\mathcal{K}$ be as in Theorem~\ref{theorem1}. If $|\mathcal{K}| \geq \min \{2L^2, 2M_x-1\}$, then the feasibility problem in \eqref{eq:optNEW} has a unique solution. On the other hand, if $|\mathcal{K}|<2L$, then the problem in \eqref{eq:optNEW} is not uniquely identifiable.}
\end{lemma}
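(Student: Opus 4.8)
The plan is to reduce the two-channel feasibility problem \eqref{eq:optNEW} to a statement about a single sparse sequence, and then to split the argument cleanly into two parts: the sparsity together with universal sampling (A5) will force that sequence to be zero, and the coprimeness assumption (A3) will convert the resulting polynomial identity into the fundamental ambiguity \eqref{eq:ambiguitu_class}. The engine is the cross-convolution identity of Xu et al. Concretely, I would take any feasible $(\{\tilde S(e^{\mathrm{j}\omega_0 k})\}_{k\in\mathcal K}, \tilde x_1,\tilde x_2)$ together with the ground truth; since both reproduce the measurements, $\tilde S(e^{\mathrm{j}\omega_0 k})\tilde X_n(e^{\mathrm{j}\omega_0 k}) = S(e^{\mathrm{j}\omega_0 k}) X_n(e^{\mathrm{j}\omega_0 k})$ on $\mathcal K$ for $n=1,2$. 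Multiplying the $n=1$ relation by $X_2$, the $n=2$ relation by $X_1$, and cancelling the common nonvanishing source value using (A4), I obtain
\[
\tilde X_1(e^{\mathrm{j}\omega_0 k})\,X_2(e^{\mathrm{j}\omega_0 k}) = \tilde X_2(e^{\mathrm{j}\omega_0 k})\,X_1(e^{\mathrm{j}\omega_0 k}), \quad k \in \mathcal K .
\]
Introducing $w = x_2 \ast \tilde x_1 - x_1 \ast \tilde x_2$, this says exactly that $W(e^{\mathrm{j}\omega_0 k}) = 0$ for every $k\in\mathcal K$. By (A1) each convolution is supported in $[2M_x-1]$ and is $L^2$-sparse, so $w$ is supported in $[2M_x-1]$ with $\|w\|_0 \le 2L^2$.

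For \emph{sufficiency} I would then argue $w=0$. The hypothesis $L<\sqrt{M_x}$ gives $2L^2 \le 2M_x-1$, so $\min\{2L^2,2M_x-1\}=2L^2$ and $|\mathcal K|\ge 2L^2 \ge \|w\|_0$. Viewing $w$ through the $|\mathcal K|\times(2M_x-1)$ partial-Fourier (Vandermonde) matrix, the universal-sampling/full-spark property (A5) says its spark equals $|\mathcal K|+1 > \|w\|_0$, so the only sequence of its support vanishing on $\mathcal K$ is $w=0$; in the complementary branch $|\mathcal K|\ge 2M_x-1$ the full DTFT already determines $w$, again giving $w=0$. Hence $\tilde X_1 X_2 = \tilde X_2 X_1$ holds identically as Laurent polynomials. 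Now I invoke coprimeness \emph{twice}. Working in $\mathbb C[z,z^{-1}]$, whose units are the monomials $\alpha z^{-m}$, (A3) for the ground truth makes $\gcd(X_1,X_2)$ a unit, so $X_1 \mid \tilde X_1 X_2$ forces $X_1 \mid \tilde X_1$; writing $\tilde X_1 = R X_1$ and substituting yields $\tilde X_2 = R X_2$ for a common Laurent factor $R$. But $R$ divides both $\tilde X_1$ and $\tilde X_2$, and (A3) for the \emph{feasible} filters forces $\gcd(\tilde X_1,\tilde X_2)$ to be a unit, so $R=\alpha z^{-m}$ is a monomial. Thus $\tilde x_n = \alpha \mathcal S_m(x_n)$, and the matching $\tilde S$ on $\mathcal K$ equals $\alpha^{-1}\mathcal S_{-m}(s)$, placing the feasible solution in $\mathcal C(s,\{x_n\}_{n=1}^2)$.

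For \emph{necessity} I would produce a counterexample for any universal $\mathcal K$ with $|\mathcal K|<2L$. Full spark of the $|\mathcal K|\times M_x$ Vandermonde matrix furnishes a null vector $v$ supported in $[M_x]$ with $\|v\|_0 = |\mathcal K|+1 \le 2L$. Splitting $\mathrm{supp}(v)$ into nonempty sets $T_a \sqcup T_b$ with $|T_a|,|T_b|\le L$, set $a = v|_{T_a}$ and $b = -v|_{T_b}$, so that $a-b=v$ and hence $A(e^{\mathrm{j}\omega_0 k})=B(e^{\mathrm{j}\omega_0 k})$ on $\mathcal K$ while $a\neq b$; both are $L$-sparse. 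Choosing any $x_2$ that is $L$-sparse and coprime with both $a$ and $b$, and any source $s$ nonvanishing on $\mathcal K$, the two instances $(s,a,x_2)$ and $(s,b,x_2)$ are both feasible (they satisfy (A1),(A3),(A4)) and yield identical measurements, yet they differ only in the channel-1 filter and so cannot be related by a single global scale-and-shift; hence \eqref{eq:optNEW} is not uniquely identifiable.

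The main obstacle I anticipate is the double use of coprimeness in the sufficiency argument: after reducing to $\tilde X_1 X_2 = \tilde X_2 X_1$, one must argue carefully in the Laurent ring that (A3) on the ground truth extracts the common factor $R$ and that (A3) on the feasible filters collapses $R$ to a unit, correctly absorbing the $z=0$ (pure shift) factors so that the conclusion is exactly the ambiguity class and nothing coarser. A secondary point requiring care is bookkeeping the ambient dimension at which full spark is applied ($2M_x-1$ in the sufficiency bound versus $M_x$ in the necessity construction) and verifying that the counterexample filters can indeed be chosen coprime and strictly outside $\mathcal C(s,\{x_n\}_{n=1}^2)$.
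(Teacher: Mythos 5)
Your proposal is correct and follows essentially the same route as the paper's proof: cross-convolution and cancellation of the non-vanishing source (A4) reduce the problem to a single sparse sequence $q$ (your $w$) annihilated on $\mathcal{K}$; full spark of the $|\mathcal{K}|\times(2M_x-1)$ Vandermonde matrix, guaranteed by (A5), together with $\|q\|_0\le 2L^2$ forces $q=0$; and coprimeness is then used twice --- once on the ground-truth pair to extract the common factor $H$, once on the competing pair to collapse $H$ to a monomial $\alpha z^{m}$ --- exactly as in the paper, which phrases your Laurent-ring divisibility argument via zero-set containment $\mathcal{Z}_{x_1}\subseteq\mathcal{Z}_{\hat{x}_1}$. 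The only substantive difference is in the necessity part: the paper perturbs \emph{both} filters using null vectors of the $|\mathcal{K}|\times M_x$ submatrix and divides the two measurement vectors elementwise, asserting $q\neq 0$ somewhat loosely, whereas you split one minimal null vector into two $L$-sparse halves for channel~1 and keep the second filter (which can be taken as a unit impulse) common to both instances. Your variant is a bit cleaner: the shared second filter immediately rules out any scale-and-shift equivalence between the two solutions, and feasibility of the counterexample under (A3) and (A4) is explicitly arranged --- points the paper's necessity argument leaves implicit.
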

	
The ground-truth signals $(\{S(e^{\mathrm{j}\omega_0k})\}_{k \in \mathcal{K}_1 \bigcup \mathcal{K}_2}, \{x_{1}, x_{2}\})$ are feasible to \eqref{eq:optNEW}. Let
\begin{align}
q = x_1 \ast \hat{x}_2 - x_2 \ast \hat{x}_1,
\label{eq:q}
\end{align}
where $(\{\hat{S}(e^{\mathrm{j}\omega_0k})\}_{k \in \mathcal{K}}, \hat{x}_1, \hat{x}_2)$ is another feasible solution to \eqref{eq:optNEW}. In the proof of Lemma~\ref{lemma1}, we have shown that the number of Fourier measurements for the unique identification is determined as the worst-case $\|q\|_0$ maximized over all feasible $(\hat{x}_1,\hat{x}_2)$. Since $x_1, x_2, \hat{x}_1$, and $\hat{x}_2$ are $L$-sparse vectors with support over $[M_x]$, in general, the worst case support of $q$ is $\min \{2L^2, 2M_x-1\}$. For high sparse signals, that is, when $L \ll M_x$ or $L<\sqrt{M_x}$, we have that $\|q\|_0 = 2L^2$. Therefore by modifying constraints on $\tilde{x}_1$ and $\tilde{x}_2$ in \eqref{eq:optNEW}, we obtain similar results in a different scenario as an immediate corollary. 

\begin{cor}[Sufficient Conditions for General Sparsity Case]
\label{lemma_new}
{Let $\omega_0$, $\mathcal{K}_1$, $\mathcal{K}_2$, and $\mathcal{K}$ be as in Theorem~\ref{theorem1}. Let $L\leq M_x$ without the restriction $L<\sqrt{M_x}$. If $|\mathcal{K}| \geq \min \{2L^2, 2M_x-1\}$, then the feasibility problem in \eqref{eq:optNEW} has a unique solution. On the other hand, if $|\mathcal{K}|<2L$ the problem in \eqref{eq:optNEW} is not uniquely identifiable.}
\end{cor}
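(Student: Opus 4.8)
The plan is to retrace the argument behind Lemma~\ref{lemma1} and observe that the hypothesis $L<\sqrt{M_x}$ enters there only to simplify $\min\{2L^2,2M_x-1\}$ to $2L^2$; every other step is agnostic to the relative sizes of $L$ and $M_x$. Accordingly, I would first reduce the feasibility problem \eqref{eq:optNEW} to a single cross-convolution identity. Given any feasible triple $(\{\hat{S}(e^{\mathrm{j}\omega_0 k})\}_{k\in\mathcal{K}},\hat{x}_1,\hat{x}_2)$, both it and the ground truth reproduce the same measurements, so at each sampled frequency $k\in\mathcal{K}$ we have $S(e^{\mathrm{j}k\omega_0})X_n(e^{\mathrm{j}k\omega_0})=\hat{S}(e^{\mathrm{j}k\omega_0})\hat{X}_n(e^{\mathrm{j}k\omega_0})$ for $n=1,2$. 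Eliminating $\hat{S}$ between the two channels and cancelling $S$, which is nonzero by (A4), yields $X_1(e^{\mathrm{j}k\omega_0})\hat{X}_2(e^{\mathrm{j}k\omega_0})=X_2(e^{\mathrm{j}k\omega_0})\hat{X}_1(e^{\mathrm{j}k\omega_0})$ for every $k\in\mathcal{K}$; equivalently, the DTFT of $q$ in \eqref{eq:q} vanishes on the entire sampling set.

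Next I would bound the support of $q$ \emph{unconditionally}. Since $x_1,x_2,\hat{x}_1,\hat{x}_2$ are $L$-sparse and supported on $[M_x]$, each convolution $x_1\ast\hat{x}_2$ and $x_2\ast\hat{x}_1$ has at most $L^2$ nonzero entries, whence $\|q\|_0\leq 2L^2$; at the same time $q$ is supported on $[2M_x-1]$, so $\|q\|_0\leq 2M_x-1$. Combining these gives $\|q\|_0\leq\min\{2L^2,2M_x-1\}$ with no reference to whether $L<\sqrt{M_x}$. This is exactly the step at which Lemma~\ref{lemma1} specialized the bound; here I simply retain the minimum, which is what distinguishes Corollary~\ref{lemma_new} from the lemma.

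Sufficiency then closes via the universal-sampling hypothesis (A5). Restricting the Vandermonde matrix $\mathbf{V}$ to the $[2M_x-1]$ columns carrying the support of $q$, full spark means every collection of $|\mathcal{K}|$ columns is linearly independent, so a nonzero $q$ with $\mathbf{V}q=0$ would require $\|q\|_0\geq|\mathcal{K}|+1$. When $|\mathcal{K}|\geq\min\{2L^2,2M_x-1\}\geq\|q\|_0$ this is impossible, forcing $q=0$, i.e., $x_1\ast\hat{x}_2=x_2\ast\hat{x}_1$. Coprimeness (A3) then compels $\hat{x}_n$ to equal $x_n$ up to a common shift and scaling, placing the feasible solution in the ambiguity class $\mathcal{C}(s,\{x_n\})$ and establishing uniqueness.

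For the converse I would reuse the construction from the proof of Lemma~\ref{lemma1} in the regime $|\mathcal{K}|<2L$: it exhibits two genuinely distinct (modulo the fundamental ambiguity) $L$-sparse feasible pairs agreeing on fewer than $2L$ Fourier samples, and it never invokes $L<\sqrt{M_x}$, so it transfers verbatim. The only point needing genuine care — and the one I expect to be the main obstacle — is confirming that \emph{no} step of the sufficiency argument silently used $2L^2\leq 2M_x-1$ (equivalently $L<\sqrt{M_x}$) beyond the cosmetic simplification of the minimum. I expect this to be settled entirely by the support bound above, which holds for all $L\leq M_x$, so that the corollary follows as an immediate consequence of the same machinery.
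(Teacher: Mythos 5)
Your proposal is correct and follows essentially the same route as the paper: the paper presents this corollary as an immediate consequence of the proof of Lemma~\ref{lemma1}, noting exactly as you do that the only place $L<\sqrt{M_x}$ entered was in simplifying the worst-case bound $\|q\|_0 \leq \min\{2L^2,\,2M_x-1\}$ to $2L^2$, while the cross-convolution reduction, the full-spark/universal-set argument forcing $q=0$, the coprimeness step, and the $|\mathcal{K}|<2L$ counterexample construction are all unchanged. Nothing further is needed.
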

	
Without assuming that $x_1$ and $x_2$ are sparse, the worst-case $\|q\|_0$ becomes $2M_x-1$, which results in the following corollary.
    
\begin{cor}[Non-Sparse FIR Filters]
\label{cor:fir}
Assume the hypotheses of Theorem~\ref{theorem1}. Let $L = M_x$. Then the same identifiability result as in Theorem~\ref{theorem1} holds if and only if $|\mathcal{K}| \geq 2M_x-1$.
\end{cor}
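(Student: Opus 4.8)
The plan is to prove both directions by tracking the cross-difference sequence $q$ defined in \eqref{eq:q} and its worst-case support length, exactly as in the proof of Lemma~\ref{lemma1}. The sufficiency direction is essentially inherited from the machinery already in place. Setting $L = M_x$ makes the sparsity constraint (A1) vacuous, and $\min\{2L^2, 2M_x-1\} = \min\{2M_x^2, 2M_x-1\} = 2M_x-1$ for every $M_x \geq 1$. Hence Corollary~\ref{lemma_new}, which already drops the restriction $L < \sqrt{M_x}$, guarantees that $|\mathcal{K}| \geq 2M_x-1$ makes the feasibility problem \eqref{eq:optNEW} uniquely solvable, and by the reduction stated just before Lemma~\ref{lemma1} this is precisely the partial identifiability asserted in Theorem~\ref{theorem1}. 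So the real work is the converse: showing that $|\mathcal{K}| \leq 2M_x-2$ forces uniqueness to fail, a sharper necessary bound than the generic $|\mathcal{K}|<2L$ threshold, which here would read $|\mathcal{K}|<2M_x$ and collide with sufficiency at $2M_x-1$.

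For necessity I would exhibit a single bad instance, which suffices since non-identifiability is an existential statement. I am therefore free to choose $x_1,x_2$ coprime with full support on $[M_x]$ (so $x_n[0]\neq 0$ and $x_n[M_x-1]\neq 0$) and nonvanishing on $\{e^{\mathrm{j}k\omega_0}\}_{k\in\mathcal{K}}$. Consider the linear map $T(\hat{x}_1,\hat{x}_2) = x_1 \ast \hat{x}_2 - x_2 \ast \hat{x}_1$ from $\mathbb{C}^{M_x}\times\mathbb{C}^{M_x}$ into $\mathbb{C}^{2M_x-1}$. A candidate pair produces the same Fourier measurements on $\mathcal{K}$, once $\hat{S}$ is defined on $\mathcal{K}$ by $\hat{S} = Y_1/\hat{X}_1$ using (A4), exactly when $T(\hat{x}_1,\hat{x}_2)$ lies in the kernel of the $|\mathcal{K}|\times(2M_x-1)$ Vandermonde sampling matrix $\mathbf{V}$, and it is genuinely new — outside the ambiguity class \eqref{eq:ambiguitu_class} — exactly when $T(\hat{x}_1,\hat{x}_2)\neq 0$.

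The crux is to show that $T$ is surjective onto $\mathbb{C}^{2M_x-1}$. I would compute its kernel: $T(\hat{x}_1,\hat{x}_2)=0$ means $X_1(z)\hat{X}_2(z) = X_2(z)\hat{X}_1(z)$, so coprimeness (A3) forces $X_1$ to divide $\hat{X}_1$ and $X_2$ to divide $\hat{X}_2$, i.e.\ $\hat{X}_1 = G X_1$ and $\hat{X}_2 = G X_2$ for a common polynomial factor $G$ in $z^{-1}$. The full-support choice pins $G$ to a nonzero constant, so $\ker T = \{c(x_1,x_2) : c \in \mathbb{C}\}$ is one-dimensional, and a rank count gives $\operatorname{rank} T = 2M_x - \dim\ker T = 2M_x-1$, which is the worst-case $\|q\|_0$ advertised before the corollary. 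Consequently $q$ can realize any vector in $\mathbb{C}^{2M_x-1}$. When $|\mathcal{K}|\leq 2M_x-2$, the matrix $\mathbf{V}$ has a nontrivial kernel, so I pick $0 \neq q^\star \in \ker\mathbf{V}$ and lift it through surjectivity to a pair with $T(\hat{x}_1,\hat{x}_2)=q^\star\neq 0$, yielding a feasible solution outside the ambiguity class and defeating uniqueness.

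The step I expect to be the main obstacle is making this counterexample fully legitimate, namely verifying that the lifted candidate itself satisfies (A3) and (A4): the pair $(\hat{x}_1,\hat{x}_2)$ must be coprime and its DTFT must not vanish on $\mathcal{K}$ so that $\hat{S}$ is well defined and nonvanishing. I would exploit the one-parameter freedom in the preimage — every $(\hat{x}_1,\hat{x}_2) + c(x_1,x_2)$ maps to the same $q^\star$ — together with the freedom in selecting $q^\star$ within $\ker\mathbf{V}$ and in perturbing the full-support coprime pair $(x_1,x_2)$, to argue that coprimeness and nonvanishing hold for a generic choice, each being the complement of a proper algebraic set. Assembling these genericity arguments cleanly, rather than the linear-algebraic core, is where the care is needed.
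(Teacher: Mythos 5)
Your sufficiency half is exactly the paper's route, but your necessity half is a genuinely different argument --- and, notably, the paper does not actually supply one. The paper disposes of Corollary~\ref{cor:fir} in a single sentence: without sparsity the worst-case $\|q\|_0$ becomes $2M_x-1$, and the Lemma~\ref{lemma1} machinery is invoked for both directions. For sufficiency this is airtight and coincides with your first paragraph ($q$ lives in $\mathbb{C}^{2M_x-1}$, the full-spark Vandermonde matrix with $|\mathcal{K}|\geq 2M_x-1$ forces $q=0$, then coprimeness pins the solution up to scaling and shift). For the ``only if,'' however, the paper implicitly leans on the necessity construction inside Lemma~\ref{lemma1} --- two distinct $L$-sparse vectors with equal images under the $|\mathcal{K}|\times M_x$ matrix $\bar{\mathbf{A}}$ --- and, as you correctly observe, that construction cannot reach the $2M_x-1$ threshold: with $L=M_x$ it needs a nonzero kernel vector of $\bar{\mathbf{A}}$ and hence only certifies non-identifiability for $|\mathcal{K}|<M_x$, while a literal substitution $L=M_x$ into the stated bound $|\mathcal{K}|<2L$ would even contradict sufficiency at $|\mathcal{K}|=2M_x-1$. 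Your surjectivity argument closes precisely this gap: $\ker T=\{c(x_1,x_2)\}$ by coprimeness plus degree counting on full-support filters, rank--nullity makes $T$ onto $\mathbb{C}^{2M_x-1}$, and any nonzero $q^\star$ in the kernel of the $|\mathcal{K}|\times(2M_x-1)$ sampling matrix (available exactly when $|\mathcal{K}|\leq 2M_x-2$) lifts to a feasible solution outside the ambiguity class. The genericity step you flag as the main obstacle is in fact resolvable using only the one-parameter freedom you already identified: any common zero of the pair $(\hat{x}_1^0+cx_1,\hat{x}_2^0+cx_2)$ must be a zero of $\hat{X}_1^0X_2-\hat{X}_2^0X_1=-Q^\star$, which is a nonzero polynomial, and each such zero (likewise each sampled frequency, for nonvanishing of $\hat{X}_1$ so that $\hat{S}=Y_1/\hat{X}_1$ is well defined and nonzero) excludes at most one value of $c$; hence all but finitely many $c$ yield a legitimate counterexample satisfying (A3) and (A4). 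In short, the paper's treatment is shorter but leaves the converse implicit (and, read naively, inconsistent with Lemma~\ref{lemma1}'s $2L$ bound), whereas your route proves the tight ``only if'' rigorously at the cost of the kernel computation and a finite-bad-set argument.
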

    
Compared with the results of \cite{xu_kilath}, where $3M_x$ time-samples are sufficient to identify the filters, in our frequency-domain approach $2M_x-1$ Fourier measurements are necessary and sufficient. Further, the Fourier measurements can be computed uniquely from $2M_x-1$ time-measurements by using a sampling kernel as shown in Section~\ref{sec:compressive_measurements}. Hence, for a large $M_x$, we gain significantly in terms of reducing the number of measurements compared with the approach in \cite{xu_kilath}. A detailed comparison of these methods is presented in Section~\ref{sec:comparision}.

Next we show how the identifiability result in Theorem~\ref{theorem1} for the two-channel case generalizes to the case of more than two channels. We obtain a particular sufficient condition by assuming that there exists a pair of coprime channels from which at least $2L^2$ Fourier measurements are available. 

\begin{theorem}
\label{theorem2}
Suppose (A1), (A3), (A4), and (A5) hold for $N \geq 2$ and $L<\sqrt{M_x}$. Then compressive MBD is partially identifiable from the Fourier measurements if the following conditions are satisfied: i) There exist $1 \leq n_1 < n_2 \leq N$ such that $\mathcal{K}_{n_1} = \mathcal{K}_{n_2} = \mathcal{K}$ for a universal set $\mathcal{K}$ with $|\mathcal{K}| \geq \min \{2L^2, 2M_x-1\}$; ii) $\mathcal{K}_n$ is a universal set such that $\mathcal{K}_n \subseteq \mathcal{K}$ and $|\mathcal{K}_n| \geq 2L$ for all $n \not\in \{n_1, n_2\}$. 
\end{theorem}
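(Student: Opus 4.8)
The plan is to reduce Theorem~\ref{theorem2} to the two-channel result already established and then propagate the conclusion to the remaining channels. As in the proof of Theorem~\ref{theorem1}, I would first recast identifiability as uniqueness of the feasibility problem \eqref{eq:optNEW}: fix an arbitrary feasible $(\hat{s}, \{\hat{x}_n\}_{n=1}^N)$ satisfying (A1), (A3), (A4) and consistent with the measurements, and show $(\tilde{s}, \{\hat{x}_n\}_{n=1}^N) \in \mathcal{C}(s, \{x_n\}_{n=1}^N)$ for some $\tilde{s}$. The guiding idea is that the distinguished coprime pair $(n_1,n_2)$, which carries $|\mathcal{K}| \geq \min\{2L^2, 2M_x-1\}$ common measurements, should already fix the global scale $\alpha$ and shift $m$, after which each remaining channel needs only to rule out a $2L$-sparse residual.

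First I would apply Lemma~\ref{lemma1} to the pair $(n_1,n_2)$. Since $\mathcal{K}_{n_1} = \mathcal{K}_{n_2} = \mathcal{K}$ is universal with $|\mathcal{K}| \geq \min\{2L^2, 2M_x-1\}$ and $x_{n_1}, x_{n_2}$ are coprime, the cross-convolution identity forces $q = x_{n_1}\ast \hat{x}_{n_2} - x_{n_2}\ast\hat{x}_{n_1}$ to have at most $2L^2$ nonzero entries with DTFT vanishing on $\mathcal{K}$, whence $q=0$ by the full-spark property of the universal set. This gives the polynomial identity $X_{n_1}\hat{X}_{n_2} = X_{n_2}\hat{X}_{n_1}$, and coprimeness yields $\hat{X}_{n_1} = X_{n_1}\,g$, $\hat{X}_{n_2} = X_{n_2}\,g$ for a common Laurent factor $g$, which the two-channel argument forces to be a monomial $\alpha z^{m}$. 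Thus $\hat{x}_{n_1} = \alpha\,\mathcal{S}_m(x_{n_1})$ and $\hat{x}_{n_2} = \alpha\,\mathcal{S}_m(x_{n_2})$.

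Next I would pin down $\hat{S}$ on all of $\mathcal{K}$ and propagate. From $Y_{n_i}(e^{\mathrm{j}k\omega_0}) = S X_{n_i} = \hat{S}\hat{X}_{n_i}$ on $\mathcal{K}$, together with $\hat{X}_{n_i}(e^{\mathrm{j}k\omega_0}) = \alpha e^{\mathrm{j}km\omega_0}X_{n_i}(e^{\mathrm{j}k\omega_0})$ and (A4), I obtain $\hat{S}(e^{\mathrm{j}k\omega_0}) = \alpha^{-1}e^{-\mathrm{j}km\omega_0}S(e^{\mathrm{j}k\omega_0})$ at every $k \in \mathcal{K}$ where $X_{n_1}$ or $X_{n_2}$ is nonzero; coprimeness of the pair guarantees at least one of them does not vanish at each sampled frequency, so $\hat{S}$ is determined on the whole of $\mathcal{K}$. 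Then for each $n \notin \{n_1,n_2\}$, using $\mathcal{K}_n \subseteq \mathcal{K}$, the consistency $SX_n = \hat{S}\hat{X}_n$ on $\mathcal{K}_n$, and $S \neq 0$, I get $\hat{X}_n = \alpha e^{\mathrm{j}km\omega_0}X_n$ on $\mathcal{K}_n$; that is, the DTFT of $\hat{x}_n - \alpha\,\mathcal{S}_m(x_n)$ vanishes on $\mathcal{K}_n$. This residual is at most $2L$-sparse with support inside the window $[\bar{M}]$ (since $|m| \leq M_x-1$ and $\bar{M}\geq 2M_x-1$), so the universal set $\mathcal{K}_n$ with $|\mathcal{K}_n|\geq 2L$ annihilates it by full spark, giving $\hat{x}_n = \alpha\,\mathcal{S}_m(x_n)$. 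Collecting all channels shows $\{\hat{x}_n\}_{n=1}^N = \{\alpha\,\mathcal{S}_m(x_n)\}_{n=1}^N$, which is exactly partial identifiability.

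The main obstacle is the interface between the pair and the remaining channels. Two points require care. First, passing from $\hat{X}_{n_i} = X_{n_i}\,g$ to a genuine monomial $g$ relies on coprimeness of the pair on both the ground-truth and the feasible sides; since the $N$-channel hypothesis (A3) only provides \emph{joint} coprimeness, I would either verify that the feasible filters restricted to $(n_1,n_2)$ inherit enough coprimeness for Lemma~\ref{lemma1} to apply, or argue the monomial property directly from the sparsity budget $L<\sqrt{M_x}$. Second, the propagation step hinges on determining $\hat{S}$ at \emph{every} frequency of $\mathcal{K}$, not merely where one reference filter is nonzero, which is precisely why coprimeness of the distinguished pair (rather than joint coprimeness alone) is the load-bearing hypothesis: without it one cannot guarantee the $2L$-sparse residual on a remaining channel is seen through a non-vanishing source, and the economical $2L$ measurements per channel would no longer suffice.
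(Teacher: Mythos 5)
Your proposal is correct and follows essentially the same route as the paper: apply the two-channel result (Theorem~\ref{theorem1}/Lemma~\ref{lemma1}) to the distinguished pair $(n_1,n_2)$, then use the identified source values on $\mathcal{K}_n \subseteq \mathcal{K}$ to reduce each remaining channel to a non-blind sparse recovery problem, where $|\mathcal{K}_n|\geq 2L$ full-spark Fourier measurements uniquely determine an $L$-sparse filter. Your writeup is in fact more explicit than the paper's brief proof, notably in propagating the ambiguity pair $(\alpha,m)$ across channels and in flagging that pairwise coprimeness of the distinguished pair (which the paper assumes in the text preceding the theorem, though it is not literally implied by the joint-coprimeness condition (A3)) is what makes both the two-channel step and the non-blind reduction sound.
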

	
\begin{proof}
By Theorem~\ref{theorem1}, the triplet $(\{S(e^{\mathrm{j}k \omega_0})\}_{k\in \mathcal{K}},x_{n_1},x_{n_2})$ is uniquely identified. Then for an appropriate choice of the sets $\mathcal{K}_n$, the recovery of the filters for the rest of the channels (those indexed by $n \not \in \{n_1, n_2\}$) reduces to a non-blind problem. Since $Y_n(e^{\mathrm{j}k \omega_0}) = S(e^{\mathrm{j}k \omega_0}) X_n(e^{\mathrm{j}k \omega_0})$, if the sets $\mathcal{K}_n$ are chosen such that $\mathcal{K}_n \subseteq \mathcal{K}_{n_1}$, then we can compute $\{X(e^{\mathrm{j}k \omega_0})\}_{k\in \mathcal{K}_n}$ from $\{Y_n(e^{\mathrm{j}k \omega_0})\}_{k\in \mathcal{K}_n}$ as $\{S(e^{\mathrm{j}k \omega_0})\}_{k \in \mathcal{K}_{n_1}}$ is already identified. Next, from the measurements $\{X(e^{\mathrm{j}k \omega_0})\}_{k\in \mathcal{K}_n}$, the filters are identifiable uniquely if $|\mathcal{K}_n| \geq 2L$.
\end{proof}

Theorem~\ref{theorem2} implies that a total $4L^2+(N-2)2L$ Fourier measurements from $N$ channels are sufficient for unique identification of the filters. Therefore on average it suffices to take $(4L^2+(N-2)2L)/N$ measurements per channel. Particularly, with sufficiently many channels ($N \gtrsim L$), the average number of measurements per channel is of the order of $L$ which is the same as required by the necessary condition. Note that when the source is known, a minimum of $2L$ Fourier measurements are necessary to uniquely identify the filters. Hence, for $N \gtrsim L$ the requirement on the average number of measurements per channel matches that of the known source case in order. 
	
\subsection{Recovering the Common Source Signal}
\label{ssec:source_recovery}
As discussed in the introduction, in certain applications such as radar, sonar, and ultrasound, it suffices to identify only the filters, which describe the target. On the other hand, there exist applications where the recovery of the source signal is important. For example, in communications or imaging, the source signal carries information and the filters describe the channel impulse response or sensitivity functions. In this section, we present conditions under which the source and filters are simultaneously identified. 
	
The results in Theorems~\ref{theorem1} and \ref{theorem2} guarantee that the filters are fully identified but the source signal is partially identified up to its Fourier measurements at the selected frequencies. In general, the recovery of the source $s$ from its partial Fourier measurements is ill-posed. However, recovery becomes feasible by introducing further restrictions on $s$. 
	
For example, suppose that $s$ is supported within $[M_s]$ for a finite integer $M_s > 0$. To uniquely determine an arbitrary source signal $s$ supported within $[M_s]$, the number of measurements needs to be at least $M_s$. On the other hand, since we choose $\omega_0$ and $\mathcal{K}$ such that the elements in $\{e^{\mathrm{j}\omega_0 k}\}_{k=0}^{M_s-1}$ are distinct, the linear system that generates the Fourier measurements at $\mathcal{K}\omega_0$ corresponds to a Vandermonde matrix of full column rank and $s$ is uniquely determined. 

Combining the above argument with Theorem~\ref{theorem1} provides the following result in the two-channel case: All $s$, $x_1$, and $x_2$ are identified from the sampling pattern given by $\mathcal{K}_1 = \mathcal{K}_2 = \mathcal{K}$ if and only if $|\mathcal{K}| \geq \max\{M_s,2L^2\}$ and (A5) is satisfied.	
	
Below we show that when there are more than two channels, with a carefully designed sampling pattern, one can significantly reduce the peak number of measurements per channel, where the gain is almost proportional to the number of channels. This is interesting, particularly when $M_s$ dominates $L$, that is, $M_s \gg 2L^2$. 
	
A na\"{i}ve approach is to consider $M_s$ Fourier measurements from any pair of channels and to apply Theorem~\ref{theorem1} to identify the corresponding filters and the source $s$. As the source is identified, the problem is reduced to a non-blind one for the remaining channels, and to identify the filters in those channels, $2L$ Fourier measurements are necessary and sufficient. However, this naive approach may not be well suited for practical applications. For example, in the radar and ultrasound applications, the Fourier measurements are computed as follows: The analog signal is first pre-filtered with a kernel followed by an analog-to-digital converter (ADC). Then, as discussed in Section \ref{sec:compressive_measurements}, the time-domain samples are linearly combined to give the Fourier measurements \cite{eldar_sos, mulleti_kernal}. Here the sampling rate is determined by the number of Fourier measurements. In practice, it has been shown that the bit resolution of ADCs is limited when the sampling rate is high \cite{walden_adc}. Therefore, it is desirable to minimize the maximum number of Fourier measurements per channel. 
	
To achieve recovery we consider a pairwise strategy. For example, assume there are four channels ($N = 4$). Consider universal sets $\{\mathcal{K}_n \}_{n=1}^4$ such that they satisfy the following conditions: (i) $\mathcal{K}_1 = \mathcal{K}_2$ and $\mathcal{K}_3 = \mathcal{K}_4$ ; and (ii) $|\mathcal{K}_n| \geq 2L^2$ for $n = 1, 2, 3, 4$. By applying Theorem~\ref{theorem1}, independently to the measurements from the pair channels $(1, 2)$ and $(3, 4)$, we identify the filters $\{x_n\}_{n=1}^4$ as well as the Fourier measurements $\{s(e^{\mathrm{j}k\omega_0})\}_{k \in \mathcal{K}_1}$ and $\{s(e^{\mathrm{j}k\omega_0})\}_{k \in \mathcal{K}_3}$. These two sets of partial Fourier measurements may be differently scaled due to shift and scaling ambiguities. Let us assume that there are no such inter-pair ambiguities. Then we have overall $\{s(e^{\mathrm{j}k\omega_0})\}_{k \in \mathcal{K}_1 \cup \mathcal{K}_3}$ Fourier measurements of the source. If $|\mathcal{K}_1 \cup \mathcal{K}_3| \geq M_s$, then we can uniquely recover the source. Here, the maximum number of the Fourier measurements can be $2L^2$. In this particular case, if we consider more channels, it is necessary and sufficient to consider $2L$ measurements from the additional channels to identify the corresponding filters as the source is identified from the first four channels.  We generalize the example to any $N$ channels and show how to choose the universal sets $\{\mathcal{K}_n\}_{n=1}^N$ to eliminate the inter-pair ambiguity and uniquely identify the source and the filters.
	
To this end, to apply the pairwise strategy for any $N\geq 2$ channels, we consider the Fourier-domain sampling grids given by
\begin{equation}
\label{eq:calK_n}
\begin{aligned}
\mathcal{K}_{2r-1} = \mathcal{K}_{2r},|\mathcal{K}_{2r-1}|= K, & ~\mathcal{K}_{2r-1} ~ \text{is a universal set,} \\
\text{and}\quad  |\mathcal{K}_{2r-1} \cap \mathcal{K}_{2r+1}|&=2, \quad r = 1, 2, \dots, R,
\end{aligned}
\end{equation}
where $R \leq \lfloor N/2 \rfloor$. The first three conditions with $K \geq 2L^2$ are necessary and sufficient for the recovery of the filters and the partial Fourier measurements of the source. The last condition is that there should be overlap of two samples in successive pairs of sample sets. We show that this overlap aids in removing inter-pair ambiguity of shift and scaling. 

The identifiability result for both the source and filter for compressive MBD is stated in the following theorem.

\begin{theorem}
\label{thm_source}
{Suppose that (A1) to (A5) hold for $N\geq 2$ and $L<\sqrt{M_x}$. Then compressive MBD is uniquely identifiable according to Definition~\ref{def:identifiability} from the Fourier measurements if the following conditions are satisfied: i) For at least $\max \left \{\frac{M_s-K}{K-2}+1, 1 \right \}$ pair of channels, the corresponding sampling sets satisfy the conditions in \eqref{eq:calK_n} with $K \geq 2L^2$; and ii) At least $2L$ Fourier measurements are available from the rest of the channels.}
\end{theorem}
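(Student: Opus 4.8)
The plan is to reduce the multichannel problem to a chain of two-channel problems, solve each with Theorem~\ref{theorem1}, and then stitch the partial solutions together using the two-sample overlaps prescribed in \eqref{eq:calK_n}. First I would apply Theorem~\ref{theorem1} separately to each of the $R\geq\max\{(M_s-K)/(K-2)+1,\,1\}$ designated pairs $(2r-1,2r)$. Since $\mathcal{K}_{2r-1}=\mathcal{K}_{2r}$ is universal with $|\mathcal{K}_{2r-1}|=K\geq 2L^2$, Theorem~\ref{theorem1} guarantees that the filter pair $(x_{2r-1},x_{2r})$ and the partial source spectrum $\{S(e^{\mathrm{j}k\omega_0})\}_{k\in\mathcal{K}_{2r-1}}$ are recovered up to a single scaling and shift internal to that pair. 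Concretely, the quantity recovered from pair $r$ has the form $\hat{S}_r(e^{\mathrm{j}k\omega_0})=\gamma_r\,e^{\mathrm{j}k\omega_0 m_r}\,S(e^{\mathrm{j}k\omega_0})$ for $k\in\mathcal{K}_{2r-1}$, with an unknown nonzero scalar $\gamma_r$ and an unknown integer shift $m_r$ drawn from the ambiguity class \eqref{eq:ambiguitu_class}, while the filters of pair $r$ carry the reciprocal scaling and shift.

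The core step is to remove the inter-pair ambiguity, i.e.\ to express every pair in the reference frame of the first pair. Consider two consecutive pairs $r$ and $r+1$, and let $\{k_1,k_2\}=\mathcal{K}_{2r-1}\cap\mathcal{K}_{2r+1}$ be the two overlapping indices guaranteed by \eqref{eq:calK_n}. On these two frequencies both $\hat{S}_r$ and $\hat{S}_{r+1}$ are available, and by (A4) the true $S(e^{\mathrm{j}k_i\omega_0})$ are nonzero, so the ratios $\hat{S}_r(e^{\mathrm{j}k_i\omega_0})/\hat{S}_{r+1}(e^{\mathrm{j}k_i\omega_0})=\beta\,w^{k_i}$ are computable, where $\beta=\gamma_r/\gamma_{r+1}$ and $w=e^{\mathrm{j}\omega_0(m_r-m_{r+1})}$. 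Forming the quotient of these two ratios eliminates $\beta$ and leaves $w^{k_1-k_2}$, from which the \emph{integer} relative shift $\delta:=m_r-m_{r+1}$ and then $\beta$ must be extracted. This is the step I expect to be the main obstacle: recovering an integer shift from a single phase requires an injectivity argument. The admissible shifts are bounded because the recovered filters must obey the support constraint (A1), so $|\delta|$ is controlled by $M_x$; combined with the no-aliasing property of the universal sampling pattern (A5), which renders the map from admissible integer shifts to their induced phases injective over the relevant range of $[\bar{M}]$ with $\bar{M}=\max\{2M_x-1,M_s\}$, this forces $\delta$, and hence $\beta$, to be uniquely determined. Knowing $(\beta,\delta)$ then lets me rescale and shift $\hat{S}_{r+1}$ and the filters of pair $r+1$ into the frame of pair $r$; iterating from $r=1$ places all $R$ pairs in one common reference frame.

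With the pairs aligned, I would count the distinct frequencies covered by the source spectrum. Under the construction \eqref{eq:calK_n}, where consecutive sets overlap in exactly two indices and non-consecutive ones can be taken disjoint, each new pair contributes $K-2$ fresh indices, so $\big|\bigcup_{r=1}^{R}\mathcal{K}_{2r-1}\big|\geq K+(R-1)(K-2)$; the hypothesis $R\geq(M_s-K)/(K-2)+1$ guarantees this is at least $M_s$ (the regime $M_s\leq K$, where a single pair already suffices, is covered by the $\max\{\cdot,1\}$). Since $\omega_0$ avoids aliasing over $[\bar{M}]\supseteq[M_s]$, the corresponding $M_s$ distinct nodes give a full-column-rank Vandermonde system, so the entire source $s$, supported on $[M_s]$, is recovered in the common frame.

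Finally, for each remaining channel $n$ outside the designated pairs, the source is now fully known, so from $Y_n(e^{\mathrm{j}k\omega_0})=S(e^{\mathrm{j}k\omega_0})X_n(e^{\mathrm{j}k\omega_0})$ and (A4) I divide out $S$ to obtain $\{X_n(e^{\mathrm{j}k\omega_0})\}_{k\in\mathcal{K}_n}$. Because $x_n$ is $L$-sparse on $[M_x]$ and $\mathcal{K}_n$ is universal with $|\mathcal{K}_n|\geq 2L$, the full-spark property makes the $L$-sparse $x_n$ the unique solution of this non-blind recovery problem. Collecting the filters from the pairs, the filters from the remaining channels, and the source---all referred to the single global scaling and shift inherited from pair $1$---yields a solution determined exactly up to the fundamental ambiguity class \eqref{eq:ambiguitu_class}, which is precisely the unique identifiability asserted in Definition~\ref{def:identifiability}.
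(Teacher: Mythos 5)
Your proof follows essentially the same route as the paper's: apply Theorem~\ref{theorem1} pairwise to the channels sharing a sampling set of size $K\geq 2L^2$, remove the inter-pair scaling/shift ambiguities using the two-frequency overlaps prescribed in \eqref{eq:calK_n}, count frequencies as $(R-1)(K-2)+K\geq M_s$ to get a full-column-rank Vandermonde system recovering $s$, and then solve the remaining channels as non-blind problems with $2L$ measurements each. The only difference is that you explicitly flag the delicate point of extracting the integer relative shift from the overlap phase ratios, a step the paper's proof asserts without justification, so your write-up is if anything slightly more careful there.
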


\begin{proof}
Let us assume that we can recover $M_s$ Fourier measurements of $s$ from the first $2R$ channels together with the corresponding filters where $2 R \leq N$. For these channels, the sampling pattern is chosen such that $\mathcal{K}_{2r-1} = \mathcal{K}_{2r}$ for $r = 1,\dots,R$ with $|\mathcal{K}_{2r}| = K \geq 2L^2$. For each pair of $(2r-1)$th and $2r$th channels for $r = 1,\dots, R$, the assumptions imply via Theorem~\ref{theorem1} that $x_{2r-1}$, $x_{2r}$, and $S(e^{\mathrm{j}k\omega_0})$ for $k \in \mathcal{K}_{2r-1}$ are uniquely identified up to a scaling and shift ambiguity. In other words, $S(e^{\mathrm{j}k\omega_0})$ is identified up to multiplication by $\alpha_r e^{\mathrm{j}k\omega_0 p_r}$ for $k \in \mathcal{K}_{2r-1}$ for unknown constants $\alpha_r \neq 0$ and $p_r \in \mathbb{Z}$. 
Due to the overlaps $|\mathcal{K}_{2r-1} \cap \mathcal{K}_{2r+1}|=2$ for $r=1, \dots, R$ in the design of $\{\mathcal{K}_{2r-1}\}_{r=1}^{R}$. These inter-pair ambiguity constants can be removed up a global constant in a sequential manner. For example, let us assume that $\mathcal{K}_1 \cap \mathcal{K}_3 = \{k_1, k_2\}$. In other words, for the channel pairs $(1, 2)$ and $(3, 4)$, Fourier measurements are taken at the overlapped frequencies $k_1\omega_0$ and $k_2\omega_0$. By applying Theorem~\ref{theorem1} to these pairs, we obtain the Fourier measurements of the source at the overlapped frequencies up to inter-pair ambiguities, which are $\alpha_1 S(e^{\mathrm{j}k_2\omega_0}) e^{\mathrm{j}k_2\omega_0 p_1}$, $\alpha_1 S(e^{\mathrm{j}k_1\omega_0}) e^{\mathrm{j}k_1\omega_0 p_1}$, $\alpha_2 S(e^{\mathrm{j}k_2\omega_0}) e^{\mathrm{j}k_2\omega_0 p_2}$, and $\alpha_2 S(e^{\mathrm{j}k_1\omega_0}) e^{\mathrm{j}k_1\omega_0 p_2}$. Then $\alpha_1/\alpha_2$ and $p_1-p_2$ are computed from the ratios among these measurements and enable to obtain $\{\alpha_1 S(e^{\mathrm{j}k\omega_0})e^{\mathrm{j}kp_1\omega_0} \}_{k \in \mathcal{K}_3}$ from $\{\alpha_2 S(e^{\mathrm{j}k\omega_0})e^{\mathrm{j}kp_2\omega_0} \}_{k \in \mathcal{K}_3}$, where the former is aligned to the first pair.
		
Applying this process successively, we identify the filters $x_1, x_2, \dots, x_{2R}$, up to a global scaling factor $1/\alpha_1$ and a shift by $-p_1$, together with the Fourier measurements $\alpha_1 S(e^{\mathrm{j}k\omega_0})e^{\mathrm{j}kp_1\omega_0}$ for $k \in \bigcup_{n=1}^{R} \mathcal{K}_n$. 
		
To identify $s$ from the above Fourier measurements, it is sufficient to satisfy 
\begin{align}
\bigcup_{r=1}^{N_1} \mathcal{K}_{2r-1}=  (R - 1)  (K-2)+K \geq M_s.
\label{eq:source_channels}
\end{align}
In other words the source can be identified from a minimum $2R \geq 2 \cdot \frac{M_s-K}{K-2}+2$ channels if $K \geq 2L^2$. From the remaining $N-2R$ channels it is sufficient to consider any $2L$ Fourier measurements to identify the corresponding filters. \end{proof}

The maximum number of measurements per channel in Theorem~\ref{thm_source} can be restricted to $2L^2$. The inequality in \eqref{eq:source_channels} implies that when $K$ increases beyond $2L^2$, the number of channels for the identifiability can be reduced. In other words, one can trade-off between the number of measurements per channel and the number of channels. For $K=2L^2$ the source and the filters are identifiable if $N \geq \max\left\{\frac{M_s - 2L^2}{L^2-1} +2, 2 \right\} $.   
	
By Corollary~\ref{cor:fir}, we obtain an immediate extension of Theorem~\ref{thm_source} to the non-sparse case.

\begin{cor}[Identifiability Results for Source and Non-Sparse Filters]
\label{cor:source_nonsparse}
In Theorem~\ref{thm_source}, let $\text{supp}\{x_n\} = [M_x]$, where $M_x\leq M/2$. Then compressive MBD is uniquely identifiable from $2M_x-1$ Fourier measurements from each of $N\geq 2 \cdot \frac{M_s-2M_x+1}{2M_x-3}+2$ channels. 
\end{cor}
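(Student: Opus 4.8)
The plan is to obtain this statement as a direct specialization of Theorem~\ref{thm_source}, replacing the sparse per-pair requirement $K \geq 2L^2$ by the non-sparse requirement $K \geq 2M_x-1$ furnished by Corollary~\ref{cor:fir}, and then re-running the channel count of Theorem~\ref{thm_source} with $K = 2M_x-1$. First I would note that setting $L = M_x$ removes the sparsity constraint, so the worst-case support of the cross-convolution difference $q$ in \eqref{eq:q} grows to $2M_x-1$; this is precisely the regime covered by Corollary~\ref{cor:fir}, which guarantees that each coprime pair of non-sparse FIR channels is partially identifiable if and only if $|\mathcal{K}| \geq 2M_x-1$. Thus the role played by Theorem~\ref{theorem1} inside the proof of Theorem~\ref{thm_source}, namely identifying both filters of a pair together with the source's partial Fourier measurements on the shared set up to a per-pair shift--scale ambiguity, is now played verbatim by Corollary~\ref{cor:fir} under the substitution $2L^2 \mapsto 2M_x-1$.

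Next I would reuse the pairwise sampling geometry of \eqref{eq:calK_n} unchanged: choose $K = 2M_x-1$, pair the channels so that $\mathcal{K}_{2r-1} = \mathcal{K}_{2r}$ with $|\mathcal{K}_{2r-1}| = K$, and impose the two-frequency overlap $|\mathcal{K}_{2r-1}\cap\mathcal{K}_{2r+1}| = 2$ between successive pairs. Applying Corollary~\ref{cor:fir} to each pair recovers $x_{2r-1}$, $x_{2r}$, and $\{S(e^{\mathrm{j}k\omega_0})\}_{k\in\mathcal{K}_{2r-1}}$ up to a pair-dependent factor $\alpha_r e^{\mathrm{j}k\omega_0 p_r}$. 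The inter-pair alignment argument in the proof of Theorem~\ref{thm_source} is purely a property of the sampling grid and the two overlapping frequencies, independent of whether the filters are sparse; hence the ratios among the four overlapping measurements again determine $\alpha_r/\alpha_{r+1}$ and $p_r - p_{r+1}$, collapsing all per-pair ambiguities to a single global shift--scale.

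With the alignment in place, the source's Fourier samples are known on $\bigcup_{r=1}^{R}\mathcal{K}_{2r-1}$, whose cardinality is $(R-1)(K-2)+K$ by the overlap bookkeeping in \eqref{eq:source_channels}. I would then enforce $(R-1)(K-2)+K \geq M_s$ so that, since the nodes $\{e^{\mathrm{j}\omega_0 k}\}$ are distinct and (A5) makes the associated Vandermonde system full column rank, the finite-length source $s$ supported on $[M_s]$ is recovered uniquely; the filters follow immediately. Substituting $K = 2M_x-1$ gives $R-1 \geq (M_s-2M_x+1)/(2M_x-3)$, i.e. $N = 2R \geq 2(M_s-2M_x+1)/(2M_x-3)+2$, which is the claimed channel count, and each channel uses exactly $K = 2M_x-1$ measurements. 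Finally I would observe that the hypothesis $M_x \leq M/2$ is equivalent to $M_s \geq M_x$, which both keeps the $2M_x-1$ samples genuinely compressive ($2M_x-1 \leq M$) and keeps the denominator $2M_x-3$ positive, so that the bound is well posed.

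The main obstacle I anticipate is not the arithmetic but checking that Corollary~\ref{cor:fir} slots into the pairwise machinery with the same output as Theorem~\ref{theorem1}: I must confirm that its non-sparse guarantee delivers the source's partial Fourier measurements on the full shared set $\mathcal{K}_{2r-1}$, and not merely the filters, since the overlap-based alignment consumes exactly two of those source samples per adjacent-pair junction. Once that equivalence is verified, the remainder is a transcription of the proof of Theorem~\ref{thm_source} with $2L^2$ replaced by $2M_x-1$ and $L$-sparse recovery of the remaining filters replaced by full-support Vandermonde inversion.
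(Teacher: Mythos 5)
Your proposal is correct and matches the paper's intended argument exactly: the paper presents this corollary as an immediate consequence of Theorem~\ref{thm_source} via Corollary~\ref{cor:fir}, i.e.\ precisely your substitution of $K = 2M_x-1$ for $2L^2$ in the pairwise sampling scheme \eqref{eq:calK_n}, the sparsity-independent inter-pair alignment, and the channel-count bookkeeping $(R-1)(K-2)+K \geq M_s$ yielding $N \geq 2\,\frac{M_s-2M_x+1}{2M_x-3}+2$. (Only a trivial side remark is off: $M_x \leq M/2$ with $M = M_x+M_s-1$ gives $M_s \geq M_x+1$, not merely $M_s \geq M_x$, which changes nothing in the argument.)
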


Even in the non-sparse case, the measurement system can be compressive when the number of Fourier measurements $2M_x-1$ is smaller than the available time-domain measurements $M= M_x+M_s-1$, that is, $M_s>M_x$. 

\subsection{Extension to Sparse MBD with Circular Convolution}
The MBD problem considered in the previous sections assumes that the measurements consist of a linear convolution of the source and filters, whereas, the recent results in the literature consider the MBD problem with circular convolutions. Here we extend our results to the case of circular convolution. In this setup, the $N$-channel MBD time-domain outputs are given as $\{y_n = s\circledast x_n\}_{n=1}^N$, where $\circledast$ denotes circular convolution, and the supports of the filters and source are within the set $[M]$. In other words, we assume that $\text{supp}\{s\} \subseteq M$ and  $\text{supp}\{x_n\} \subseteq M$. Due to circular convolution, the measurements $y_n = s \circledast x_n$ are $M$-periodic. We further assume that the filters are $L$-sparse and they are coprime. In this case, the goal is to derive identifiability conditions to uniquely recover the source $s$ and the filters $\{x_n\}_{n=1}^N$ from the discrete Fourier transform (DFT) measurements $\{Y_n(e^{\mathrm{j}k\omega_0})\}_{k \in \mathcal{K}_n}$, where $\omega_0 = \frac{2\pi}{M}$ and $\mathcal{K}_n \subseteq [M]$. With these settings, for $N=2$, following the steps in the proof of Lemma~\ref{lemma1}, the sequence $q$ in  \eqref{eq:q} is given as $q = x_1 \circledast \hat{x}_2 - x_2 \circledast \hat{x}_1$. To follow the remaining steps of the proof and prove the identifiability results, we have to ensure that $q = x_1 \circledast \hat{x}_2 - x_2 \circledast \hat{x}_1=x_1 * \hat{x}_2 - x_2 \ast \hat{x}_1$. This is indeed true if we assume that the filters are supported within the set $\lfloor M/2 \rfloor$. With these assumptions, we state the extension of Theorem~2 for the circular convolution case.

\begin{theorem}
\label{theorem04}
{Let $N\geq 2$ and $M\in \mathbb{N}$. Let $(s, \{x_n\}_{n=1}^N)$ be arbitrary while satisfying (A1) to (A5) with $M_x = \lfloor M/2 \rfloor$ and $M_s = M$. Let $y_n = s\circledast x_n$ for $n=1, 2, \dots, N$. Then $s$ and $\{x_n\}_{n=1}^N$ are simultaneously identified from the DFT measurements $\{Y_n(e^{\mathrm{j}\omega_0k})\}_{k \in \mathcal{K}_n}$ if the following conditions are satisfied: (i) For at least $\max \left \{\frac{M-K}{K-2}+1, 1 \right \}$ pair of channels, the corresponding sampling sets $\mathcal{K}_n$ satisfy the conditions in \eqref{eq:calK_n} where $K \geq 2L^2$; and (ii) a minimum of $2L$ DFT measurements are available from the rest of the channels.}
\end{theorem}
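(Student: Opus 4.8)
The plan is to reduce Theorem~\ref{theorem04} to the linear-convolution results already proved, using the support constraint $M_x = \lfloor M/2 \rfloor$ to eliminate the circular wrap-around, and then to re-run the pairwise stitching argument of Theorem~\ref{thm_source} verbatim, with the DTFT replaced by the DFT sampled on the uniform grid $\omega_0 = 2\pi/M$. Only the sufficiency direction is asserted, so that is all I would establish.

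First I would prove the key reduction. Fix a pair of channels $(2r-1, 2r)$ with $\mathcal{K}_{2r-1} = \mathcal{K}_{2r}$, and let $(\hat{x}_{2r-1}, \hat{x}_{2r})$ be any feasible alternative. As indicated in the paragraph preceding the theorem, the cross-convolution residual is $q = x_{2r-1} \circledast \hat{x}_{2r} - x_{2r} \circledast \hat{x}_{2r-1}$. Since all four filters are $L$-sparse and supported within $[\lfloor M/2 \rfloor]$, each linear cross-convolution $x_{2r-1} \ast \hat{x}_{2r}$ and $x_{2r} \ast \hat{x}_{2r-1}$ is supported within $[2\lfloor M/2 \rfloor - 1] \subseteq [M-1]$, so no aliasing occurs modulo $M$ and the circular convolutions coincide with the linear ones, giving $q = x_{2r-1} \ast \hat{x}_{2r} - x_{2r} \ast \hat{x}_{2r-1}$. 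This is exactly the quantity analyzed in Lemma~\ref{lemma1}, so the worst-case bound $\|q\|_0 \leq \min\{2L^2, 2M_x-1\}$ carries over unchanged, equalling $2L^2$ under $L < \sqrt{M_x}$.

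With this in hand, I would invoke the two-channel result pairwise. The DFT values $Y_n(e^{\mathrm{j}\omega_0 k})$ are DTFT samples at the $M$-th roots of unity, so the feasibility problem \eqref{eq:optNEW} keeps its multiplicative form and, by (A5), the partial DFT matrix restricted to a universal set $\mathcal{K}_{2r-1}$ has full spark. Hence Theorem~\ref{theorem1} applies to each pair with $K \geq 2L^2$, identifying $x_{2r-1}$, $x_{2r}$, and $\{S(e^{\mathrm{j}\omega_0 k})\}_{k \in \mathcal{K}_{2r-1}}$ up to a per-pair scaling and shift. I would then reuse the inter-pair ambiguity removal of the proof of Theorem~\ref{thm_source}: the two overlapping frequencies guaranteed by $|\mathcal{K}_{2r-1} \cap \mathcal{K}_{2r+1}| = 2$ in \eqref{eq:calK_n} fix the ratio $\alpha_r/\alpha_{r+1}$ and the shift difference, aligning the partial DFT coefficients of $s$ sequentially into a single estimate of $\{S(e^{\mathrm{j}\omega_0 k})\}_{k \in \bigcup_r \mathcal{K}_{2r-1}}$ up to one global ambiguity. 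The counting identity $(R-1)(K-2) + K \geq M$ shows that $R \geq (M-K)/(K-2) + 1$ pairs cover all $M$ DFT frequencies; since $s$ is $M$-supported and $\omega_0 = 2\pi/M$, the inverse DFT recovers $s$ uniquely. For the remaining $N - 2R$ channels the source is then known at every sampled frequency, so each filter is recovered from a non-blind system, and $2L$ DFT measurements per channel suffice by the same full-spark argument.

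I expect the only genuinely new step to be the no-wrap-around reduction of the first paragraph; everything downstream is a transcription of the linear-convolution proofs with DTFT samples replaced by DFT samples on $\{2\pi k/M\}$. The one point warranting care is verifying $2\lfloor M/2 \rfloor - 1 \leq M - 1$ for both even and odd $M$, which is what guarantees that the linear cross-convolutions never overlap their tails under circular reduction.
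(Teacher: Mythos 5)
Your proposal is correct and takes essentially the same approach as the paper: the paper's own argument (given in the paragraph preceding the theorem) is exactly your reduction, namely that the support constraint $M_x = \lfloor M/2 \rfloor$ forces the circular cross-convolution residual $q = x_1 \circledast \hat{x}_2 - x_2 \circledast \hat{x}_1$ to coincide with the linear one, after which the proofs of Lemma~\ref{lemma1}, Theorem~\ref{theorem1}, and the pairwise stitching argument of Theorem~\ref{thm_source} are rerun verbatim with DFT samples on the grid $\omega_0 = 2\pi/M$. Your explicit check that $2\lfloor M/2 \rfloor - 1 \leq M-1$ for both parities of $M$ simply makes precise the step the paper asserts without verification.
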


Then the following result for the non-sparse case is obtained as an immediate corollary. 

\begin{cor}[Non-Sparse FIR with Circular Convolution]
\label{cor:fir_circle}
Consider the assumptions of Theorem~\ref{theorem04}. Let $\text{supp}\{x_n\} = [M_x]$ where $M_x \leq M/2$. Suppose that $N\geq \frac{2M-4}{2M_x-3}$. Then both the source and the filters are identifiable iff the number of Fourier measurements are greater then or equal to $2M_x-1$.
\end{cor}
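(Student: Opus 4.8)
The plan is to obtain Corollary~\ref{cor:fir_circle} by specializing Theorem~\ref{theorem04} to the non-sparse regime $L = M_x$ and tracking how the per-pair measurement budget $K$ changes. The conceptual key is that, without sparsity, the worst-case support of the cross-difference $q = x_1 \circledast \hat{x}_2 - x_2 \circledast \hat{x}_1$ collapses from $2L^2$ down to $2M_x - 1$. This is exactly the content of Corollary~\ref{cor:fir}, whose conclusion transfers verbatim to the circular-convolution setting: the hypothesis $M_x \leq M/2$ guarantees $q = x_1 \circledast \hat{x}_2 - x_2 \circledast \hat{x}_1 = x_1 \ast \hat{x}_2 - x_2 \ast \hat{x}_1$, precisely the reduction already carried out just before Theorem~\ref{theorem04}.

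For the sufficiency direction, I would invoke Corollary~\ref{cor:fir} to conclude that each coprime pair of non-sparse length-$M_x$ filters, together with the source's Fourier samples on that pair's sampling set, is blindly identifiable as soon as the set has cardinality at least $2M_x - 1$. I therefore apply Theorem~\ref{theorem04} with the per-pair budget $K = 2M_x - 1$ in place of the sparse value $2L^2$. Substituting $K = 2M_x - 1$ and $M_s = M$ into the pair count $\max\{(M-K)/(K-2) + 1, 1\}$ gives
\[
\frac{M - (2M_x - 1)}{(2M_x - 1) - 2} + 1 = \frac{M - 2M_x + 1}{2M_x - 3} + 1 = \frac{M - 2}{2M_x - 3},
\]
where the maximum is attained by the first argument because $M_x \leq M/2$ forces $2M_x - 1 \leq M - 1 < M$. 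Each source-recovering pair occupies two channels, so the required channel count is twice this, namely $(2M-4)/(2M_x-3)$, which is exactly the hypothesis on $N$. Hence every channel can be enrolled in a valid pair whose sampling sets obey \eqref{eq:calK_n} with $K = 2M_x - 1$, and Theorem~\ref{theorem04} then yields simultaneous identification of $s$ and all $\{x_n\}_{n=1}^N$. Should $N$ strictly exceed the threshold, the surplus channels are handled non-blindly once $s$ is known, and recovering a non-sparse length-$M_x$ filter from a known source needs only $M_x \leq 2M_x - 1$ samples, so the stated budget still suffices.

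For necessity, I would use the converse half of Corollary~\ref{cor:fir}: if any blind pair is allotted fewer than $2M_x - 1$ Fourier measurements, there already exist coprime length-$M_x$ filters for that pair that are not uniquely determined, and the full system inherits this failure. This pins the threshold at exactly $2M_x - 1$ and completes the equivalence.

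The main obstacle is bookkeeping rather than conceptual: one must be careful that the relevant per-pair budget in the non-sparse case is $2M_x - 1$ inherited from Corollary~\ref{cor:fir}, and not the sparse value $2L^2 = 2M_x^2$, and then verify that feeding $K = 2M_x - 1$ into the channel-count formula of Theorem~\ref{theorem04} reproduces the stated bound exactly. A secondary point to confirm is that the inter-pair overlap requirement $|\mathcal{K}_{2r-1} \cap \mathcal{K}_{2r+1}| = 2$ in \eqref{eq:calK_n} remains realizable when each sampling set has size $2M_x - 1$, which holds as soon as $M_x \geq 2$.
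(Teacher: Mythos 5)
Your proposal is correct and follows essentially the same route as the paper: the paper obtains Corollary~\ref{cor:fir_circle} as an immediate consequence of Theorem~\ref{theorem04} by making the non-sparse substitution of Corollary~\ref{cor:fir} (per-pair budget $2M_x-1$ in place of $2L^2$, justified by the circular-to-linear reduction under $M_x \leq M/2$) and evaluating the pair-count formula, which is exactly your computation $\frac{M-(2M_x-1)}{(2M_x-1)-2}+1 = \frac{M-2}{2M_x-3}$ doubled to $\frac{2M-4}{2M_x-3}$ channels. Your bookkeeping, including the handling of surplus channels non-blindly and the necessity direction inherited from the converse of Corollary~\ref{cor:fir}, matches the paper's intended argument.
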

Note that the condition $\text{supp}\{x_n\} = [M_x]$ implies that the filters are in a low-dimensional subspace of dimension $M_x$.
	
\subsection{Recovery from Samples in the $z$-Domain}
A DTFT can be considered as a special case of the $z$-domain sample evaluated at a complex number of unit modulus. In this section we show that the results in the previous sections generalize to the case where the measurements of the output channels are given as samples in the $z$-domain

For example, as in \eqref{eq:optNEW}, let us consider the filter identification problem for the two-channel case from samples $\{Y_n(z_k) = S(z_k) X_n(z_k)\}_{k\in [K]}$ where $\{z_k\}_{k\in[K]}$ denotes sampling grid in the $z$-domain. We assume that $\{S(z_k)\}_{k \in [K]}$ is non vanishing. To show the identifiability, we can follow the lines of the proof of Theorem~\ref{theorem1} by substituting $e^{\mathrm{j}k\omega_0}$ by $z_k$. With the $z$-domain measurements, all the steps of the proof of Theorem~\ref{theorem1} remain valid except the spark properties of the resulting $\mathbf{A}$ matrix (see \eqref{eq:Aq}). 

With $z$-domain sampling, the matrix $\mathbf{A}$ is given as
\begin{align}
\mathbf{A}=\begin{pmatrix}
1 &z_1 & z_1^2 &\hdots & z_1^{2M_x-1}\\
1 & z_2 & z_2^2 &\hdots & z_2^{2M_x-1}\\
\vdots & \vdots & \vdots & \ddots & \vdots\\
1 & z_K & z_K^2 &\hdots & z_K^{2M_x-1}\\
\end{pmatrix} \in \mathbb{C}^{K \times 2M_x }.
\label{eq:amat}
\end{align}
The matrix $\mathbf{A}$ in \eqref{eq:amat} need not have full spark for any arbitrary choice of $z_k$. Here we show a particular choice of $z_k$s such that the matrix has full spark. 
	
Let us assume that $z_0 \in \mathbb{C}$ such that $z_0 \neq 1$. Let $\mathcal{K} \in [2M_x]$ be a distinct set of integers such that $|\mathcal{K}| = K$. Let $z_k = z_0^{p_k}$ for $p_k \in \mathcal{K}$. Then the matrix $\mathbf{A}$ in \eqref{eq:amat} has full spark if $\mathcal{K}$ is a universal set. 
To show the full spark property, let us consider a submatrix of $\mathbf{A}$ which consists of $K$ distinct columns indexed by $m_1, m_2, \dots, m_{K}$. With $z_k = z_0^{p_k}$, the submatrix is given as
\begin{align}
\begin{pmatrix}
z_0^{p_1 m_1} &z_0^{p_1 m_2}  &\hdots & z_0^{p_1 m_K} \\
z_0^{p_2 m_1} & z_0^{p_1 m_2}  &\hdots & z_0^{p_1 m_K} \\
\vdots & \vdots & \ddots & \vdots\\
z_0^{p_K m_1} & z_0^{p_K m_2}  &\hdots & z_0^{p_K m_K} \\
\end{pmatrix} \in \mathbb{C}^{K \times K }.
\label{eq:amat2}
\end{align}
Since the choice of the columns is arbitrary, the matrix $\mathbf{A}$ will have full spark if the submatrix is invertible. Since $z_0 \neq 1$, the submatrix is similar to matrix $\mathbf{V}$ in Section \ref{sec:universal_set} with seeds $\{z_0^{m_k}\}_{k =1}^K$. Since $\mathcal{K}$ is a universal set and $p_k \in \mathcal{K}$, the submatrix has full spark and is invertible. Hence, the matrix $\mathbf{A}$ with the particular choice of sampling grid in the $z$-domain has full spark which implies that the filters can be uniquely identifiable in the two-channel case if $K \geq 2L^2$. The problem is not identifiable if $K < 2L$. Similarly, we can extend the results of Theorem~\ref{theorem2} and Theorem~\ref{thm_source} to the case where the samples are measured in the $z$-domain.

A major difference between the identifiability results from the Fourier measurements and from the measurements in the $z$-domain is that the sampling grid in the former case depends on the support of the source and the filters. For example, the results in Theorem~\ref{thm_source} assumes that the sampling interval $\omega_0$ is selected such that the set $\{e^{\mathrm{j}\omega_0 k}\}_{k=0}^{\max\{2M_x-2,M_s-1\}}$ has distinct elements. However, with $z$-domain sampling, the sampling grid can be designed independent of the support of the source or the filters. 
	
\section{Comparison With Prior Art}
\label{sec:comparision}
We first compare our results for non-sparse cases and then provide a comparison for sparse MBD.

\subsection{Comparison of Non-Sparse Case}
Xu et al. \cite{xu_kilath} considered the problem of estimating filters only without sparsity assumption with linear-convolution. They made the assumptions that the filters are coprime and the source has a linear complexity\footnote{Mathematically, the linear complexity of a sequence $s$ is defined as the smallest integer $L_c$ such that there exists a set of complex-valued amplitudes $\{c_\ell\}_{\ell=1}^{L_c}$ and complex-valued roots $\{r_\ell\}_{\ell=1}^{L_c}$ such that $ s[m] = \sum_{\ell=1}^L c_\ell r_\ell^m$.}  greater than or equal to $2M_x$, which implies that $M_s \geq 4M_x$. With these assumptions, the authors show that the filters are identifiable from $N=2$ channels if $3M_x$ consecutive measurements of $y_n$ are available.  Note that with $M_s \geq 4M_x$, the length of $y_n$ is given by $M \geq 5M_x-1$ out of which $3M_x$ are sufficient to identify the filters.
	
In comparison, our results in Corollary \ref{cor:fir}, together with the time-domain results in Corollary \ref{cor:time}, state that we can identify the filters with $2M_x-1$ time samples. As in \cite{xu_kilath}, we too impose the coprimeness condition on the filters. However, we do not restrict the filter to have a longer support. In our approach,  the support of the source could be either larger or smaller compared with the support of the filters. Furthermore, our results are valid for any source signal whose DTFT samples do not vanish at a given frequency location. The source need not satisfy a linear complexity constraint. 

For example, let us assume that the source $s$ has linear-complexity of one, that is, the samples of the source sequence are given as $s[m] = c_1 r_1^m$ for $m \in [M_s]$. In addition, let us assume that $r_1 = e^{\mathrm{j}\omega_1}$. In this case, the DTFT of the source sequence is given as $S(e^{\mathrm{j}\omega}) = c_1 \frac{1-e^{\mathrm{j}(\omega-\omega_1)M_s}}{1-e^{\mathrm{j}(\omega-\omega_1)}}$. The DTFT vanishes at $\omega = p\frac{ 2\pi}{Ms}+\omega_1$ where $p \in \mathbb{Z}\setminus\{0\}$. If we chose our sampling set $\mathcal{K}$ and $\omega_0$ such that the set $\mathcal{K}\omega_0$ does not have zeros of $S(e^{\mathrm{j}\omega})$ then our method identifies the filters. In particular, let the cardinality of the set $\mathcal{K}$ be given, that is, the number of Fourier measurements to be taken is known. For a given $\omega_1$ and $M_s$, if $\omega_0$ and $\mathcal{K}$ are chosen as $\min \left\{\frac{4\pi}{M_s |\mathcal{K}|}, \frac{2\pi}{2M_x-1} \right\}$ and $ \left \lceil \frac{\omega_1-2\pi/M_s}{\omega_0} \right\rceil +[|\mathcal{K}|]$, then $\{S(e^{\mathrm{j}k\omega_0})\}_{k\in \mathcal{K}}$ does not vanish. Hence, our approach identifies the filters for $L_c=1$, whereas, the method by Xu et al. \cite{xu_kilath} cannot identify the filters uniquely.
	
Recently, Xia and Li \cite{xia_li} considered MBD problem with circular convolution in a deterministic setup where the source and filters are real and deterministic, but not sparse. Similar to our assumption in the case of non-sparse circular MBD setup, they assumed that $\text{supp}\{x_n\} = [M_x]$ with $M_x<M$. They showed that by using all $M$ time samples from each channel \emph{almost all} the sources and the filters are uniquely identifiable iff  $N\geq \frac{M-1}{M-M_x}$. The authors used the conjugate symmetry property of the Fourier transform of real signals to restrict the feasible solution sets. The results show that there exist source and filters such that the identifiability results fail. However, our results hold for any source filter pairs, which satisfy the desired conditions (cf. Corollary~\ref{cor:fir_circle}). We show that $N\geq \frac{2M-4}{2M_x-3}$ channels are sufficient for unique identification of the source and filters iff $2M_x-1$ Fourier measurements per channel are available. For $M_x = M/2$, both the results work for $N=2$ channels and require all the measurements. On the other hand, when $M_x \ll M$, our result requires more number of channels but with fewer measurements per channel than that by Xia and Li \cite{xia_li} with two channels.

\subsection{Comparison with Recent Results on Sparse MBD}
All the results discussed in this section consider the MBD setup with circular convolution. Hence, we compare them with our results in Theorem~4. The results in \cite{balzano_nowak, lee_17, wang_chi, cosse_mbd} consider identifiability of MBD problems with the assumption that the filters are random and sparse. Balzano and Nowak \cite{balzano_nowak} considered a BGPC problem with oversampled DFT matrix and showed that when $L$-sparse signals $x_n$s are generic and have common known support, it is necessary to have measurements from $N\geq \lceil (M-1)/(M-L) \rceil $ channels for perfect recovery of unknown gains. Li et al. \cite{lee_17} studied the identifiability conditions for a general BGPC problem with subspace and sparsity constraints. The authors showed that under generic sparsity constraints on the filters, the problem is identifiable with high-probability up to acceptable ambiguities as long as $N=\mathcal{O}(M \log M)$. In \cite{wang_chi, cosse_mbd}, the authors considered a sparse MBD problem with circular convolution by assuming that the source is invertible and the filters are sparse with randomly chosen support and amplitudes. Specifically, Wang and Chi \cite{wang_chi} assumed that the sparsity of the filters follows a Bernoulli-subgaussian model. With the assumption that the source $s$ is approximately flat in the Fourier domain, the authors show that the problem could be efficiently solved through an $\ell_1$-minimization approach, as long as $N=\mathcal{O}(M \log^4 M)$. Cosse \cite{cosse_mbd} assumed that the source is invertible and the location of the non-zero values of the sparse filters are chosen uniformly at random over $[M]$. The recovery is guaranteed with high-probability as soon as the number of channels $N$ and the dimension of the filters $M$, satisfy $N \lesssim M$ and $N \gtrsim L^2$ where the filters are assumed to be $L$-sparse.\footnote{$M \gtrsim N \Leftrightarrow \exists c \in \mathbb{R}$ s.t. $M \geq c N$.} 

Our results is distinguished from the previous results as follows:
\begin{enumerate}
	\item The aforementioned recent results considered random sparsity or subspace models on the filters, whereas, we consider a deterministic sparsity model for the filters.
	
	\item In \cite{balzano_nowak,lee_17, wang_chi, cosse_mbd}, sparsity is introduced to derive the identifiability results but not with the goal of compressing the measurements. The results are derived by assuming that all $M$ time samples $y_n$ are available in all the channels. We show that sparsity also helps in identifiability by using compressive measurements in the frequency domain. Instead of using $M$ time samples, we show identifiability by using $2L^2$ Fourier samples of $y_n$. 
	
	\item In \cite{lee_17} and  \cite{wang_chi}, the number of channels required does not depend on the sparsity level of the filters. We show that the source and filters are identifiable from $N \geq \max \left\{\frac{M - 2L^2}{L^2-1} +2, 2 \right\}$ channels. 
		
	\item In \cite{lee_17} and  \cite{wang_chi}, the total number of measurements is on the order of $\mathcal{O}(M \log M)$ and $\mathcal{O}(M \log^4 M)$, respectively, whereas, in our setup, we need $2L^2$ measurements from at least  $\max \left\{\frac{M - 2L^2}{L^2-1} +2, 2 \right\}$ channels, which results overall $\frac{(M-2)2L^2}{L^2-1}$ measurements which is on the order of $M$. 
		
	\item Comparing the results in the case of known support, in \cite{balzano_nowak}, all $M$ samples are considered per channel and overall $\mathcal{O}(M)$ measurements are required. In our case, we need only $L^2$ measurements per channel with overall measurements on the order $M$. We gain in terms of the number of measurements per channel but we require more channels compared with \cite{balzano_nowak}.

\end{enumerate}

\subsection{Relation to Blind CS}
The proposed compressive MBD problem can be viewed as a special case of blind compressive sensing (BCS) \cite{blindCS}. In BCS, a set of signals that are sparse in an unknown bases are uniquely identified from their compressed measurements. Similarly, in compressive MBD, the output sequences $\{y_n = s*x_n\}_{n=1}^N$ are sparse in the unknown dictionary that is given by the convolution matrix corresponding to the sequence $s$, the filters denote the sparse vectors, and the objective is to identify them from compressive measurements in the Fourier domain. However, the existing dictionary models in \cite{blindCS} do not include convolutional dictionaries.

\section{Numerical Results}
\label{sec:simulations}
The main goal of this section is to compare the identifiability results of Theorem~\ref{theorem1} to empirical observations.
While Theorem~\ref{theorem1} provides a set of necessary and sufficient conditions for the identifiability in the worst case, in practice, it is not feasible to test in the worst case scenario. Therefore, instead, we run a set of Monte Carlo simulations and observed conditions under which most cases are successful (We counted the frequency of empirical successes). On the other hand, the identifiability result by Theorem~\ref{theorem1} implies the existence of a method that uniquely determines the solution of compressive MBD, which means one has to consider the optimal algorithm regardless of its computational cost. This is also infeasible in practice. For a set of small sized problems, we performed enumeration over all possible supports, which provides an optimal reconstruction algorithm. For larger scaled problems, we consider a few selected heuristics described below. In the non-sparse case with noisefree measurements, an optimal algorithm can be obtained by a standard eigenvalue decomposition. Thus, we use it for the study of the identifiability. 

\subsection{Practical algorithms for compressive MBD}
To describe the algorithms used in the experiments, we rewrite the measurements succinctly in a compact matrix form. 
Let $\mathbf{y}_n \in \mathbb{C}^{|\mathcal{K}|}$ denote the compressible measurements from the $n$th channel. Let $\mathbf{s} \in \mathbb{C}^{|\mathcal{K}|}$ denote the column vector whose entries are $\{S(e^{\mathrm{j}k\omega_0})\}_{k \in \mathcal{K}}$. The measurements in \eqref{eq:new_mes} can then be compactly written as
\begin{align}
    \centering
    \mathbf{Y} = [\mathbf{y}_1\,\,  \mathbf{y}_2] = \text{diag}(\mathbf{s})\mathbf{\bar{A}} \mathbf{X},
    \label{eq:mes_mat}
\end{align}
where $\mathbf{X} = [\mathbf{x}_1 \,\, \mathbf{x}_2]$. Estimating $\mathbf{s}$ and $\mathbf{X}$ from $\mathbf{Y}$ is a BGPC or BDC problem with partial Fourier matrix. Practical algorithms to solve the sparse MBD or BGPC/BDC problem were proposed in \cite{wang_chi, gribonval_dl, li2017blind}. 
The algorithm proposed in \cite{wang_chi} is based on $\ell_1$-norm minimization and requires the matrix $\bar{\mathbf{A}}$ to be a full DFT matrix. Hence, the algorithm is not applicable to compressive MBD. In \cite{gribonval_dl} and \cite{li2017blind}, the authors proposed algorithms based on BDC and truncated-power iteration (TPI), respectively. 
We will use modifications of these two approaches to our setting in the simulation. 

The BDC problem is similar to a sparse dictionary learning (DL) problem \cite{ksvd}; however, there is a major difference. In the DL problem, the composite matrix $\text{diag}(\mathbf{s})\bar{\mathbf{A}}$ is unknown and needs to be estimated along with the sparse vectors $\mathbf{X}$. In contrast, BDC requires to estimate only $\mathbf{s}$ with known $\bar{\mathbf{A}}$. In \cite{gribonval_dl}, the authors proposed a convex optimization based solution to the BDC problem. However, we experimentally observed that for $N=2$ solving the BDC problem in \eqref{eq:mes_mat} by using an alternate minimization approach provides desirable identifiability results. Hence, we consider the alternate minimization-based BDC approach as one of the practical algorithms for assessing the results. 
In alternating minimization, first, we estimate the sparse vectors by assuming that $\mathbf{s}$ is known and then, in the dictionary update step, we estimate $\mathbf{s}$ by using the estimated $\mathbf{X}$. We apply orthogonal matching pursuit (OMP) \cite{pati1993orthogonal} to estimate the sparse vectors. Then $\mathbf{s}$ is estimated as a minimizer of the error $\|\mathbf{Y}-\text{diag}(\mathbf{s})\bar{\mathbf{A}}\mathbf{X}\|_2^2$. We present the proposed approach in Algorithm~\ref{alg:dl}. 
\begin{algorithm}[!h]
\begin{algorithmic}[1]
\caption{BDC for solving \eqref{eq:mes_mat}.}
\label{alg:dl}
\Statex {\textbf{Output:} $\mathbf{s}$ and $\mathbf{X}$}
\Statex {\textbf{Input:} $\mathbf{Y}$, $\bar{\mathbf{A}}$ $L$, and the initial estimate $\mathbf{s}^{(0)}$}
\State Let $i \leftarrow 1$
\Repeat
\State Estimate $\mathbf{X}^{(i)}$ by applying OMP to $\text{diag}(\mathbf{s}^{(i-1)})^{-1}\,\mathbf{Y}$ columnwise
\State $\mathbf{s}^{(i)} \leftarrow \mathop{\rm{argmin}}_{\mathbf{s}} \| \mathbf{Y}- \text{diag}(\mathbf{s})\bar{\mathbf{A}}\mathbf{X}^{(i)} \|_2^2$
\State $i \leftarrow i+1$
\Until convergence criterion is reached
\end{algorithmic}
\end{algorithm}
The solution of the optimization problem in Step 4 is given as $\mathbf{s}^{(i)} = \mathcal{D}\{\bar{\mathbf{A}}\mathbf{X}^{(i)}\mathbf{Y}^{\mathrm{H}}\}./ \mathcal{D}\{(\bar{\mathbf{A}}\mathbf{X}^{(i)})(\bar{\mathbf{A}}\mathbf{X}^{(i)})^{\mathrm{H}}$\}. Here the operator $\mathcal{D}$ acts on a square matrix to output a vector consisting of the diagonal elements of the matrix, and the symbol $./$ denotes element-wise division. In our simulations, the algorithm stops at the $i$th iteration if $\|\mathbf{X}^{(i)}-\mathbf{X}^{(i-1)}\|_2 \leq 10^{-3}$.

Next, we discuss an alternative method to identify the filters from the measurements $\mathbf{Y}$. By applying cross-correlation, as in the proof of our main results (cf. Section~\ref{sec:proof}), \eqref{eq:mes_mat} can be rewritten as
\begin{align}
\underbrace{\begin{bmatrix} \text{diag}(\mathbf{y}_2) \mathbf{\bar{A}} & -\text{diag}(\mathbf{y}_1) \mathbf{\bar{A}} \end{bmatrix}}_{\mathbf{B}} \, \underbrace{\begin{bmatrix} \mathbf{x}_1 \\ \mathbf{x}_2 \end{bmatrix}}_{\boldsymbol{\gamma}} = \mathbf{0}.
\label{eq:cc_mat}
\end{align}
In the sparse MBD framework, identifying the filters from the matrix $\mathbf{B}$ is equivalent to identifying a $2L$-sparse null vector of $\mathbf{B}$. The solution to the problem in \eqref{eq:cc_mat} can be computed as the solution to the following non-convex optimization problem:
\begin{align}
\hspace{-.1in}\begin{array}{ll}
\displaystyle \mathop{\mathrm{minimize}}_{\boldsymbol{\gamma}_1, \boldsymbol{\gamma}_2 \in \mathbb{C}^{M_x}} &  \begin{bmatrix} \boldsymbol{\gamma}_1^{\mathrm{H}} & \boldsymbol{\gamma}_2^{\mathrm{H}} \end{bmatrix} \mathbf{B}^{\mathrm{H}}\mathbf{B} \begin{bmatrix} \boldsymbol{\gamma}_1 \\ \boldsymbol{\gamma}_2 \end{bmatrix} \\
\mathrm{subject~to} & \| \boldsymbol{\gamma}_1\|_0 \leq L,\,\,  \| \boldsymbol{\gamma}_2\|_0 \leq L, \,\, \left\|\begin{bmatrix} \boldsymbol{\gamma}_1 \\ \boldsymbol{\gamma}_2 \end{bmatrix}\right\|_2 = 1.
\end{array}
\label{eq:tpi_opt}
\end{align}
Problem \eqref{eq:tpi_opt} can be solved by adapting the truncated power iteration (TPI) algorithm proposed in \cite{yuan_tpi}. In its original version, TPI is developed to compute the largest sparse eigenvector of a positive semidefinite matrix. In \cite{li2017blind}, Li et al. adopt the TPI algorithm in \cite{yuan_tpi} to solve the BGPC problem by assuming that measurement matrix is random Gaussian and the filters are jointly sparse. We present TPI to solve the optimization problem in \eqref{eq:tpi_opt}.

The TPI algorithm in \cite{yuan_tpi} is developed for a single sparse vector. In \eqref{eq:tpi_opt} the vector to be estimated is a concatenation of two sparse vectors. By adopting the original TPI algorithm, the sparsity constraints on $\boldsymbol{\gamma}$ in \eqref{eq:tpi_opt} are imposed by the composite sparse projector $\mathscr{\widetilde{P}}_L: \mathbb{C}^{2M_x}\rightarrow \mathbb{C}^{2M_x}$. For any vector $\boldsymbol{\gamma}\in \mathbb{C}^{2M_x}$, the output of the sparse projector, $\mathscr{\widetilde{P}}_L\{\boldsymbol{\gamma}\}$, is an $2L$ sparse vector computed by independently retaining the $L$ largest entries over the sets $[M_x]$ and $M_x+[M_x]$ from the support of $\boldsymbol{\gamma}$ and setting the rest of the entries to zero. The truncated power iteration method is presented in Algorithm~2.

\begin{algorithm}[h]
\begin{algorithmic}[1]
\label{alg:algorithm}
\caption{Truncated Power Iteration to solve \eqref{eq:tpi_opt}.}
\Statex {\textbf{Output:} $\boldsymbol{\gamma}$}; {\textbf{Parameter:} $\beta$}
\Statex {\textbf{Input:} $\mathbf{B}, M_x, L$, and the initial estimate $\boldsymbol{\gamma}^{(0)}$} 
\State Set $\mathbf{G} \leftarrow \beta \mathbf{I}_{2M_x} -  \mathbf{B}^{\mathrm{H}}\mathbf{B}$
\State Let $i \leftarrow 1$
\Repeat
\State $\boldsymbol{\gamma}^{(i)} \leftarrow \mathbf{G}\boldsymbol{\gamma}^{(i-1)}/\|\mathbf{G}\boldsymbol{\gamma}^{(i-1)}\|_2$
\State $\boldsymbol{\gamma}^{(i)} \leftarrow \mathscr{\widetilde{P}}_L\{\boldsymbol{\gamma}^{(i)}\}/\| \mathscr{\widetilde{P}}_L\{\boldsymbol{\gamma}^{(i)}\} \|_2$
\State $i \leftarrow i+1$
\Until convergence criterion is reached
\end{algorithmic}
\end{algorithm}
In Step 2, we denote by $\mathbf{I}_{2M_x}$ the identity matrix of size $2M_x \times 2M_x$. As was suggested in \cite{li2017blind}, a safe choice of the parameter $\beta$ is $\|\mathbf{B}\|$. We initialize $\boldsymbol{\gamma}$ as the concatenation of the outputs of the OMP algorithm to the inputs $\mathbf{y}_1$ and $\mathbf{y}_2$. The algorithm stops when the update in $\boldsymbol{\gamma}$ is not significant in successive iterations. Specifically, we stop at the $i$th iteration if $\|\boldsymbol{\gamma}^{(i)}-\boldsymbol{\gamma}^{(i-1)}\|_2 \leq 10^{-3}$.

\begin{figure}[!t]
\centering
\includegraphics[width = 1.6in]{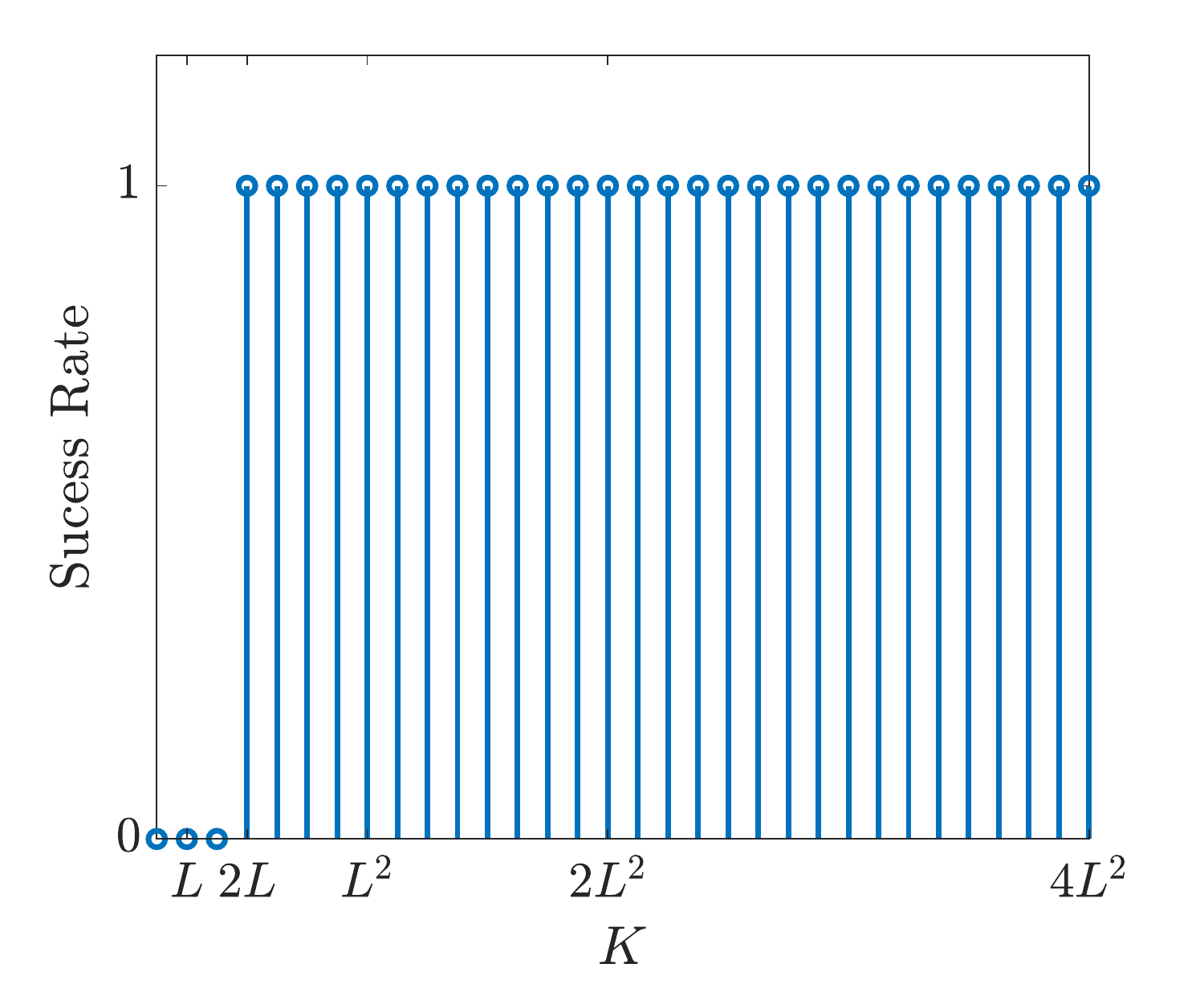}
\caption{Phase transition for the identification of $L$-sparse filters by exhaustive search ($L=4$): $K \geq 2L$ is necessary and sufficient.}
\label{fig:Esearch}
\end{figure}

\begin{figure}[!h]
\centering
\includegraphics[width = 2in]{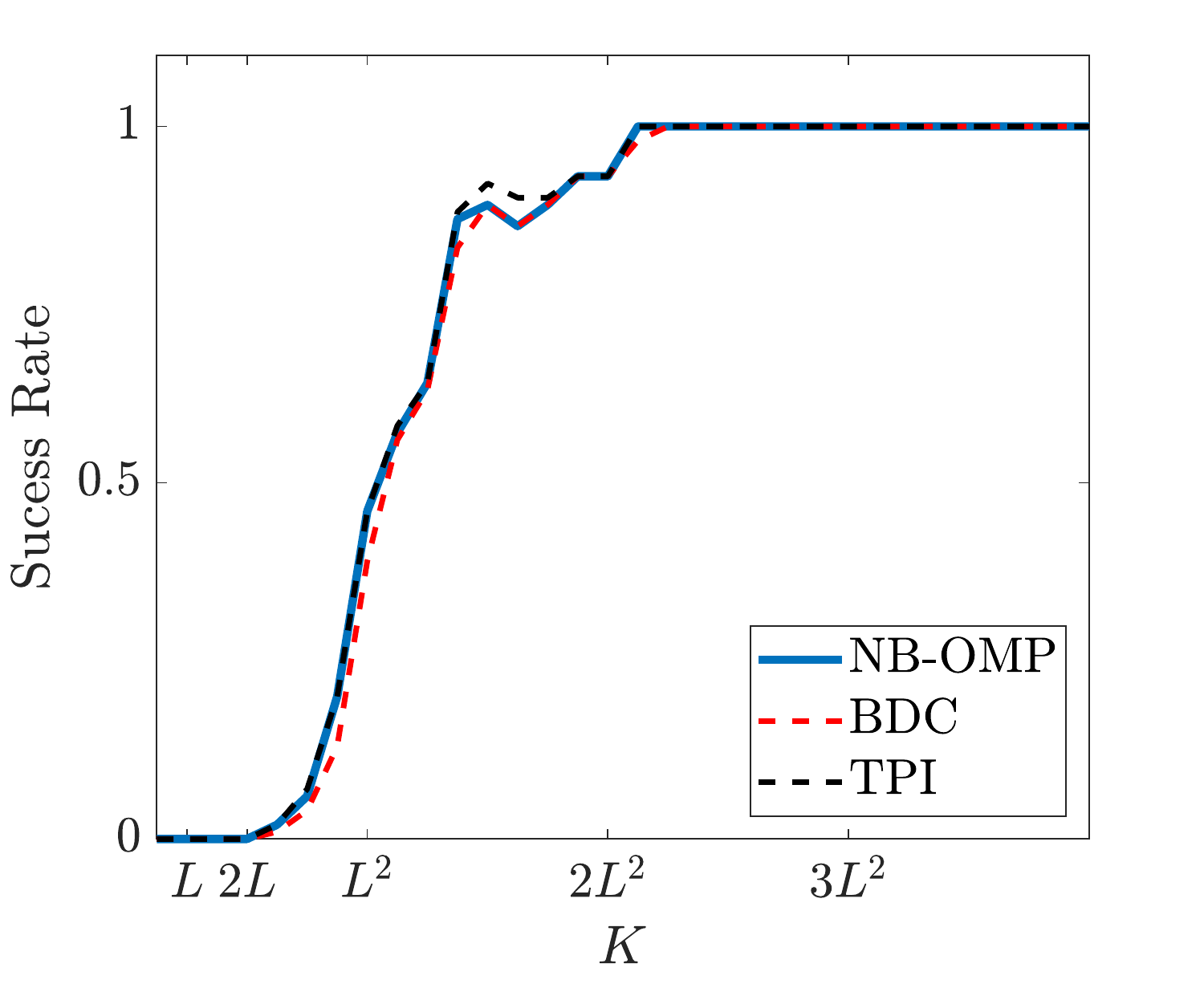}
\caption{Phase transition for the identification of $L$-sparse filters by NB-OMP, BDC, and TPI with known source ($L=4$, $M_x = 2L^2$); $K \geq 2L$ is necessary and $K>2L^2$ is sufficient.}
\label{fig:L4_comp}
\end{figure}

\subsection{Comparison of Sparse MBD}
\begin{figure}[!t]
\begin{center}
\begin{tabular}{ccc}
\subfigure[NB-OMP]{\includegraphics[width = 1.6in]{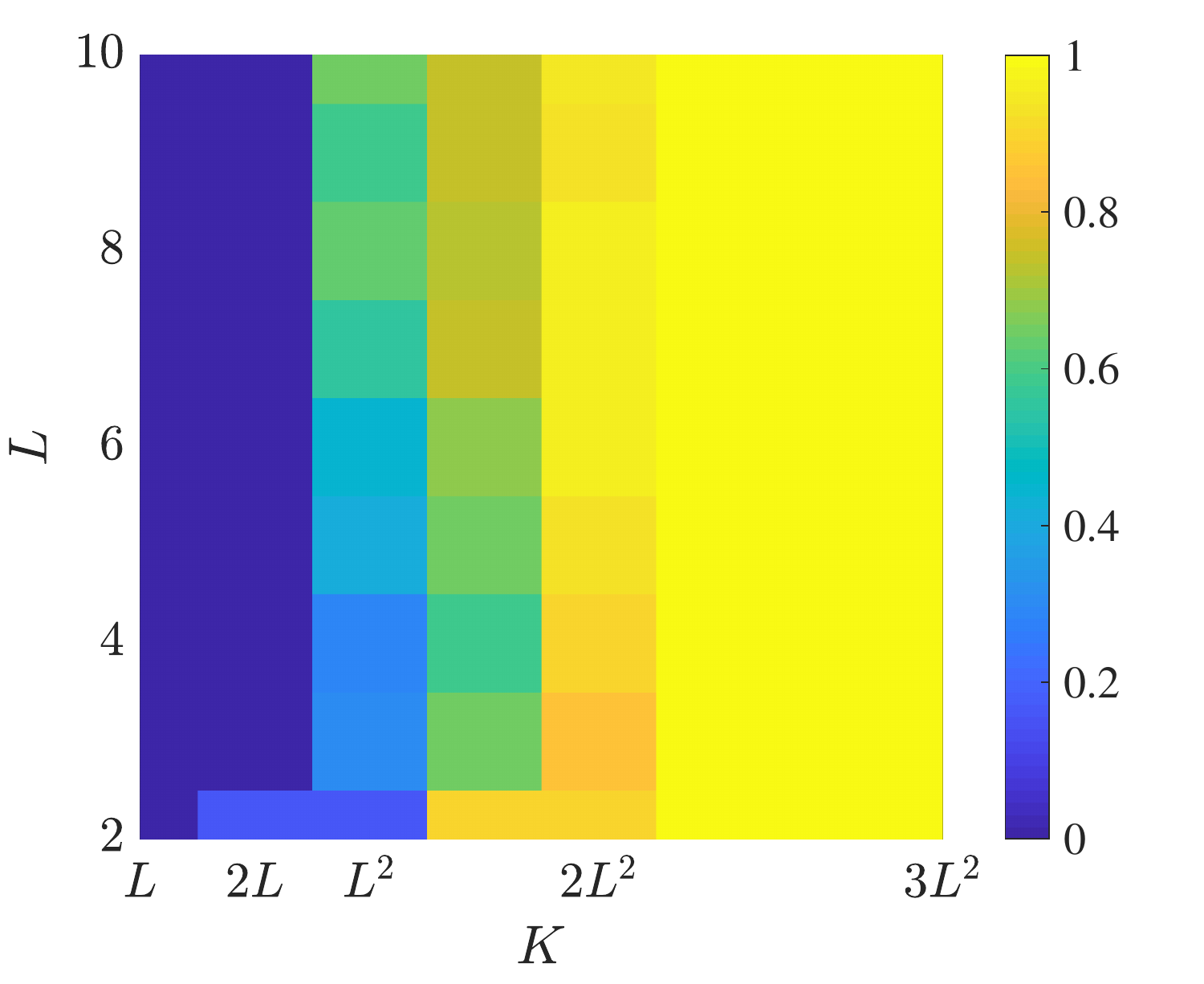}}&
\subfigure[TPI]{\includegraphics[width = 1.6in]{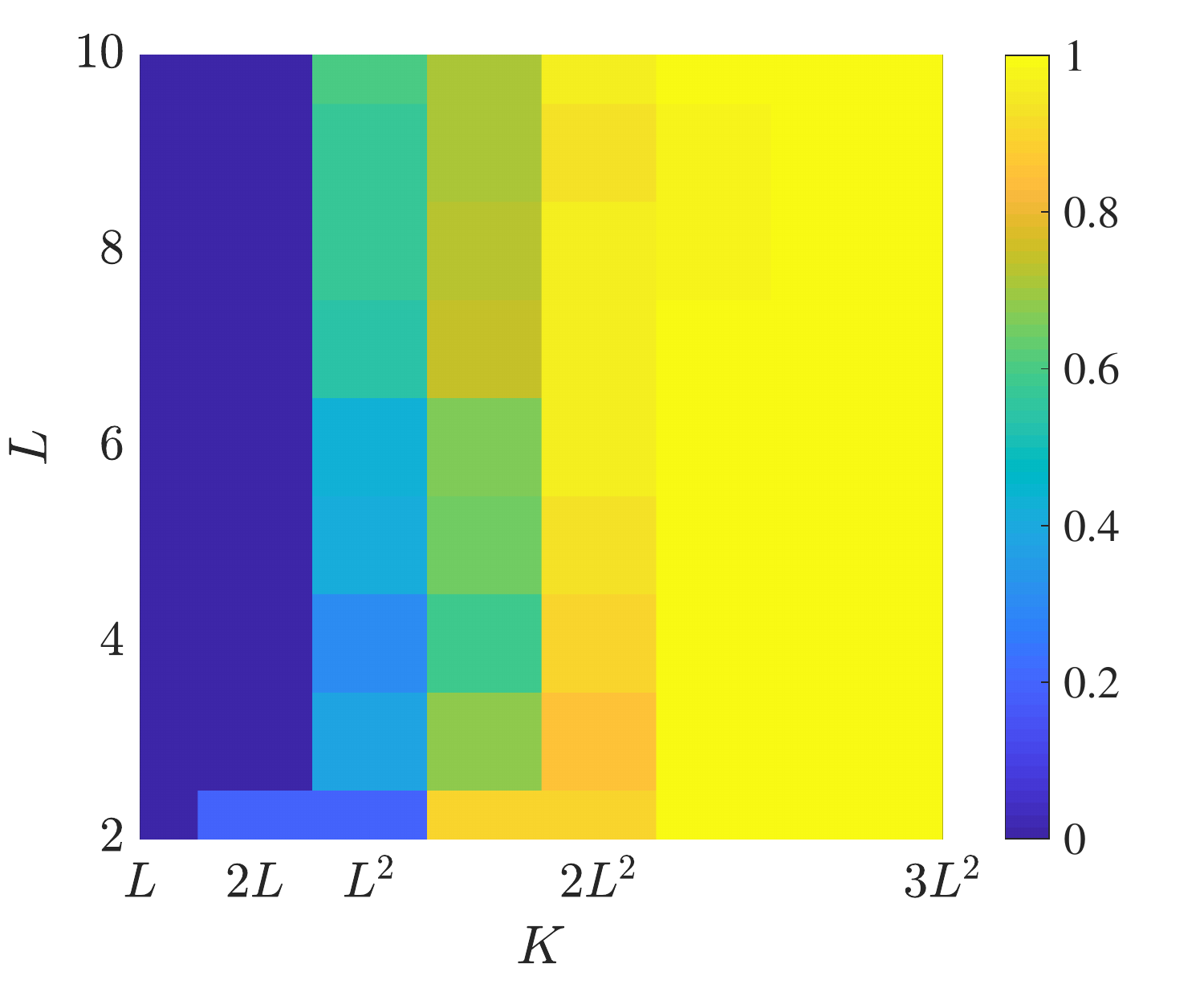}}\vspace{-.1in}
\end{tabular} 
\subfigure[BDC]{\includegraphics[width = 1.6in]{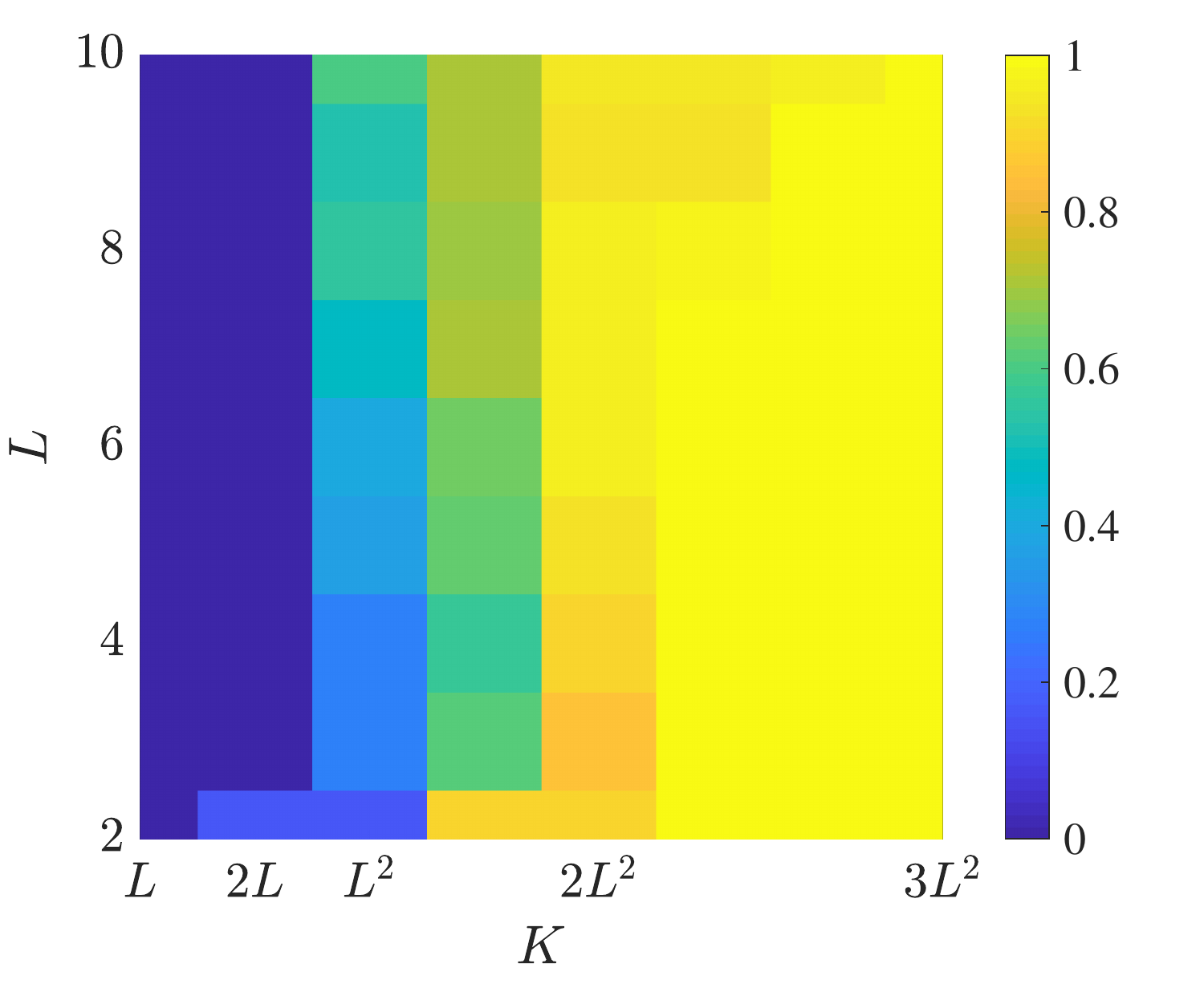}}
\end{center}
\caption{Phase transition for the identification of $L$ sparse filters in the two-channel case: $K > 2L$ is necessary and $K > 2L^2$ is sufficient. 
}
\label{fig:sparse_compare}
\end{figure}

\begin{figure}[!t]
\begin{center}
\begin{tabular}{cc}
\subfigure[Time-domain approach by Xu et al. \cite{xu_kilath}]{\includegraphics[width = 1.5in]{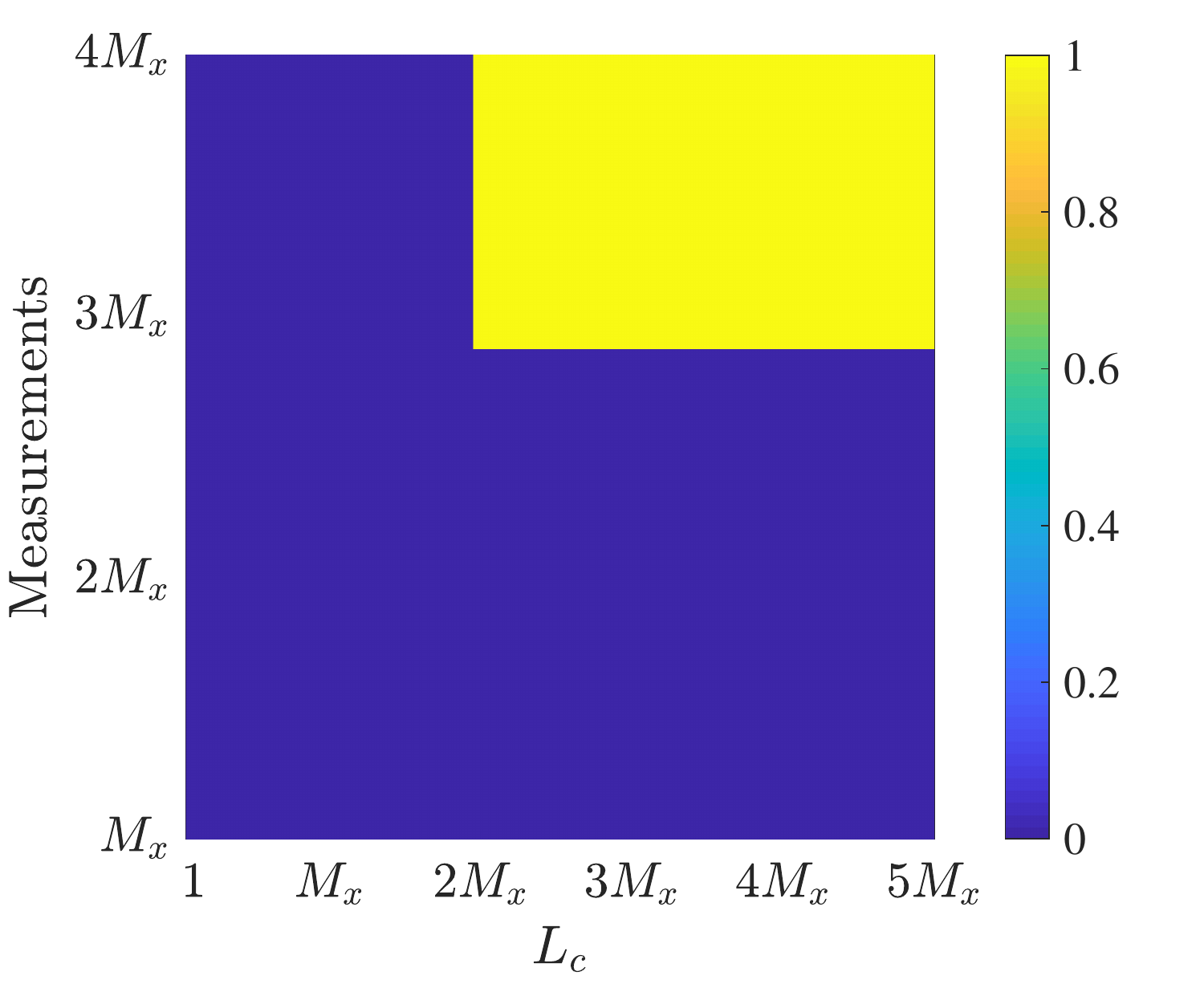}}
&
\subfigure[Frequency-domain compressive MBD]{\includegraphics[width = 1.5in]{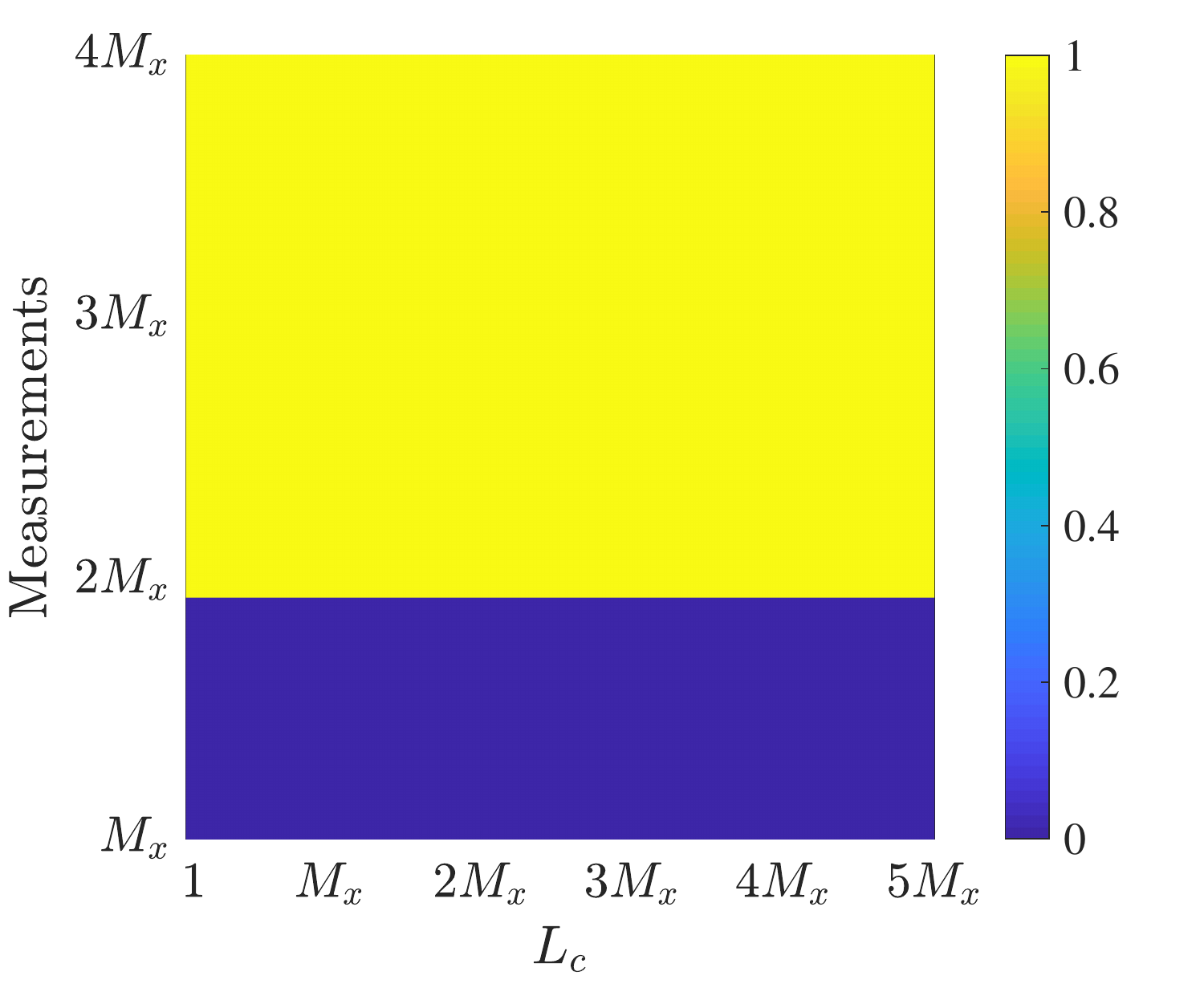}}
\end{tabular}
\end{center}
\caption{Phase transition in the non-sparse FIR case: The success rate is plotted as a function of the linear complexity $L_c$ of the source and the number of measurements $M_s-M_x$. Compressive MBD identifies the filter from at least $2M_x-1$ measurements and is independent of $L_c$. The approach by Xu et al. \cite{xu_kilath} requires at least $3M_x$ measurements and $L_c\geq 2M_x-1$.}
\label{fig:fir_compare}
\end{figure}

\subsubsection{Exhaustive Search for Compressive MBD}
The exhaustive search can be applied to either the measurement $\mathbf{Y}$ in \eqref{eq:mes_mat} or to $\mathbf{B}$ in \eqref{eq:cc_mat} to identify the unknowns. In the former case, one needs to search over three unknowns, $\mathbf{s}$, $\mathbf{x}_1$, and $\mathbf{x}_2$, whereas, the search reduces to two unknowns $\mathbf{x}_1$ and $\mathbf{x}_2$ in the latter case. Hence, we apply the exhaustive search to identify the sparse filters from $\mathbf{B}$ and compare the results with Theorem~\ref{theorem1}. 

In this experiment, the measurements are directly generated by using \eqref{eq:mes_mat} upon setting $M_x = 2L^2$, $L = 4$, and $M = 2M_x$. We choose $\mathcal{K}$ as $\{1, 2, \dots, K\}$. We assume that the full Fourier measurements of the source consist of a sum of two Gaussian pulses with their amplitudes, means, and variances given by the triplets: $(4, M/2, 0.001)$ and $(1, 2M/3, 0.01)$. We choose the sum of Gaussian pulses to make sure that the source spectrum is not flat and non-vanishing. The experiment can also be performed for any other choices of the source. The support and the coefficients of the filters are generated randomly. Specifically, for each filter $L$ non-zero values are chosen uniformly at random over the set $[M_x]$ and their amplitudes are chosen uniformly at random between 1 and 2. 
Out of the available $M$ Fourier measurements, we use $K \leq M$ to identify the filters. If the filters are uniquely identifiable up to scaling and shift of the original filters, we consider the experiment to be successful or else we  assume that the experiment has failed for that particular $K$. 

Figure~\ref{fig:Esearch} shows the success rate averaged over 200 independent experiments. Interestingly, we observe that $2L$ Fourier samples are necessary and sufficient to identify the filters.      

\subsubsection{TPI and BDC to Solve Compressive MBD}
In this section, we compare the performance of the TPI and BDC approaches to solve compressive MBD. We compare the performance of these two algorithms to that by OMP in the non-blind case where the source $s$ is known. We call this method non-blind OMP (NB-OMP). 

We use the same simulation settings as in the exhaustive search experiment. Let $\mathbf{\widetilde{X}}$ is an estimate of $\mathbf{X}$, then the filters are assumed to be identified if the normalized mean-squared error (MSE) is less than $-50$ dB, that is, if  $20\log ( \|\mathbf{X}-\mathbf{\widetilde{X}}\|_2/\|\mathbf{X}\|_2) \leq -50$. Figure~\ref{fig:L4_comp} shows the average success rate computed over 200 independent realizations of the sparse filters for $L=4$. Performances of both TPI and BDC methods follow closely that of NB-OMP. We observe that the success rate gradually increases for $K \geq 2L$ up to $K = 2L^2$. For $K>2L^2$, both BDC and TPI algorithms always identify the filters uniquely. The curve shows that $2L$ measurements are necessary and $2L^2$ compressive MBD measurements are sufficient to uniquely identify the sparse filters. 
 
Figure~\ref{fig:sparse_compare} shows the success rate for different values of sparsity levels. We note that for $L=2$, for some realizations of the filters, the algorithms are able to identify the filters for $K = 2L$. Except for the BDC method in the case of $L>8$, we observe that $K>2L^2$ measurements are sufficient. The results also show that $2L$ measurements are necessary.

\subsection{Comparison of Non-Sparse FIR MBD}
Next, we show simulation results for the non-sparse case.  As the present frequency-domain approach to derive the identifiability results and the time-domain approach taken by Xu et al. \cite{xu_kilath} are based on cross-correlation method we compare these two techniques. In both approaches, the solution is computed by solving a homogeneous equation. In our frequency-domain settings, the solution is obtained by solving the homogeneous equation \eqref{eq:cc_mat} via the optimization problem \eqref{eq:tpi_opt} without imposing the sparsity assumption. The solution is given as the eigenvector corresponding to the minimum eigenvalue of the matrix $\mathbf{B}^{\mathrm{H}}\mathbf{B}$. A similar technique is applied in Xu et al. by replacing the matrix $\mathbf{B}$ by a data matrix that consists of time-domain samples (cf. equations (7) and (20) in \cite{xu_kilath}). The identifiability results in the time-domain approach strongly depend on the linear complexity $L_c$ of the source and on the length $M_x$ of the source compared with the length of the filters $M_x$. We claim that our results do not depend on these parameters, but depend only on the number of frequency-measurements and non-vanishing property of the source at those frequencies. To justify our claim, we compare the MSE in estimating the filters by both approaches as a function of $L_c$ and the number of measurements. To have a fair comparison, we use the same number of measurements for each experiment in both the methods. In \cite{xu_kilath}, the number of measurements is given by $M_s-M_x$ by assuming that $M_s>M_x$. In the proposed frequency-domain approach, we choose $\mathcal{K}$ as set of $K = M_s-M_x$ consecutive integers. In the simulations, the source sequence $s$ of length $M_s$ with a linear complexity $L_c$ is generated using the following model: 
\begin{align}
  s[m] = \displaystyle \sum_{\ell =1}^{L_c}c_\ell r_\ell^m \quad \text{for} \quad m = 0, 1, \dots, M_s-1.
\end{align}
In the simulations, both $c_\ell$ and $r_\ell$ as well as the filters are generated randomly. 

In both approaches, we assume that the filters are uniquely identifiable if the normalized MSE is less than $-50$ dB. Fig.~\ref{fig:fir_compare} shows the success rate over 200 independent realizations of the source and filers for Xu's \cite{xu_kilath} and the proposed methods. We observe that Xu's method is successful if $L_c \geq 2M_x-1$ and $M_s\geq 4M_x$, whereas our approach identifies the filters as long as $K = M_s-M_x > 2M_x-1$ and the results are independent of the linear complexity of the source.

\section{Proof of Lemma \ref{lemma1}}
\label{sec:proof}
Theorem~\ref{theorem1} states that $(\{S(e^{\mathrm{j}\omega_0k})\}_{k \in \mathcal{K}}, x_1, x_2)$ is the unique solution to \eqref{eq:optNEW} up to the fundamental ambiguity in \eqref{eq:ambiguitu_class}. Let $(\{\hat{S}(e^{\mathrm{j}\omega_0k})\}_{k \in \mathcal{K}}, \hat{x}_1, \hat{x}_2)$ be another solution to \eqref{eq:optNEW}. Then there exist $\alpha \neq 0$ and $m_0 \in \mathbb{Z}$ such that
\begin{equation}
\label{eq:eqc1}
S(e^{\mathrm{j}\omega_0k}) = \alpha e^{\mathrm{j}\omega_0k m_0} \hat{S}(e^{\mathrm{j}\omega_0k}), \quad \forall k \in \mathcal{K},
\end{equation}
and
\begin{equation}
\label{eq:eqc2}
X_n(z) = \alpha^{-1} z^{-m_0} \hat{X}_n(z), \quad n = 1,2,
\end{equation}
if $|\mathcal{K}| \geq 2L^2$.
	
\emph{Sufficiency part: }
Since both $(\{S(e^{\mathrm{j}\omega_0k})\}_{k \in \mathcal{K}}, x_1, x_2)$ and $(\{\hat{S}(e^{\mathrm{j}\omega_0k})\}_{k \in \mathcal{K}}, \hat{x}_1, \hat{x}_2)$ are solutions to \eqref{eq:optNEW}, we have
\begin{align}
\hspace{-.09in}Y_n(e^{\mathrm{j}\omega_0k}) = S(e^{\mathrm{j}\omega_0k}) X_n(e^{\mathrm{j}\omega_0k}) = \hat{S}(e^{\mathrm{j}\omega_0k}) \hat{X}_n(e^{\mathrm{j}\omega_0k}), 
\label{eq:mes}
\end{align}
for $\quad n = 1, 2,$ and $k \in \mathcal{K}$. We have assumed that $\{S(e^{\mathrm{j}\omega_0k})\}_{k \in \mathcal{K}}$ and $\{\hat{S}(e^{\mathrm{j}\omega_0k})\}_{k \in \mathcal{K}}$ are nonzero. Therefore it follows from \eqref{eq:mes} that
\begin{equation}
\label{eq:partial_convNEW}
X_1(e^{\mathrm{j}\omega_0k}) \hat{X}_2(e^{\mathrm{j}\omega_0k}) = \hat{X}_1(e^{\mathrm{j}\omega_0k}) X_2(e^{\mathrm{j}\omega_0k}), \quad k \in \mathcal{K}.
\end{equation}
Let $Q(e^{\mathrm{j}\omega_0k})$ denote the sample of DTFT of 
\begin{align}
q = x_1 * \hat{x}_2 - x_2 * \hat{x}_1
\label{eq:q1}
\end{align}
at frequency $k\omega_0$. Then by the convolution theorem and the linearity of DTFT, the identity in \eqref{eq:partial_convNEW} is equivalently rewritten as
\begin{align}
Q(e^{\mathrm{j}\omega_0k}) = 0, \quad k \in \mathcal{K}.
\label{eq:coincideDnew}
\end{align}
	
Let $\mathbf{A} \in \mathbb{C}^{|\mathcal{K}| \times 2M_x-1}$ denote a Vandermonde matrix with its $(k,m)$th entry given as $e^{\mathrm{j} (k-1)(m-1)\omega_0}$. Then we can rewrite \eqref{eq:coincideDnew} as
\begin{align}
\mathbf{Aq} = \mathbf{0},
\label{eq:Aq}
\end{align}
where $\mathbf{q} = [q[0], q[1], \dots, q[2M_x-2]]^{\mathrm{T}} \in \mathbb{C}^{2M_x-1}$. Since $\mathbf{A}$ is a Vandermonde matrix constructed by $2M_x-1$ distinct generators $\{e^{\mathrm{j}m\omega_0}\}_{m \in [2M_x]}$ and $\mathcal{K}$ is a universal set, $\mathbf{A}$ has full spark. Therefore, if $|\mathcal{K}| \geq |\text{supp}\{q\}|$, then \eqref{eq:Aq} implies $\mathbf{q}=\mathbf{0}$. Furthermore, since $\text{supp}\{q\} \subset [2M_x]$, it follows that $q = 0$, that is,
\begin{align}
x_1 * \hat{x}_2 =  x_2 * \hat{x}_1, 
\label{eq:ratio2new}
\end{align}
which implies via the $z$-transform that 
\begin{align}
X_1(z)/X_2(z)=\hat{X}_1(z)/\hat{X}_2(z).
\label{eq:ratio4}
\end{align}
	
For brevity, we introduce the following notation. For an FIR sequence $x$, let $\mathcal{Z}_x$ denote the set of the zeros of its $z$-transform $X(z)$.
	
Since $X_1(z)$ and $X_2(z)$ do not share any common zeros, we obtain
\begin{align}
\mathcal{Z}_{x_1}\subseteq \mathcal{Z}_{\hat{x}_1}.
\label{eq:zeros}
\end{align}
Hence, there is a polynomial $H(z)$ that satisfies
\[
\hat{X}_1(z)= X_1(z)H(z).
\]
Then it follows from \eqref{eq:ratio4} that
\[
\hat{X}_2(z)= X_2(z)H(z).
\]
Since $\hat{X}_1(z)$ and $\hat{X}_2(z)$ do not share any common zeros except at 
$z=0$, we conclude that $H(z) = \alpha z^{m_0}$ for some $m_0 \in \mathbb{N}$ and $\alpha \in \mathbb{C} \setminus \{0\}$.
It remains to show that $\|q\|_0 = |\text{supp}\{q\}| \leq2 L^2$.
Without any assumption on the support structure of the filters the support of sequence $q$ depends only on $L$. Since $L<\sqrt{M_x}$, we have $\|x_1* \hat{x}_2\|_0 =\|x_2* \hat{x}_1\|_0 \leq L^2$. Hence
\begin{align}
\|q\|_0=\|x_1 * \hat{x}_2 - x_2 * \hat{x}_1\|_0 \leq 2L^2.\label{eq:l0new}
\end{align}
	
\emph{Necessity part: } We show that if $|\mathcal{K}| < 2L$, then there exist distinct solutions $(\{S(e^{\mathrm{j}\omega_0k})\}_{k \in \mathcal{K}}, x_1, x_2)$ and $(\{\hat{S}(e^{\mathrm{j}\omega_0k})\}_{k \in \mathcal{K}}, \hat{x}_1, \hat{x}_2)$ to \eqref{eq:optNEW} such that \eqref{eq:eqc1} and \eqref{eq:eqc2} are not satisfied for any $\alpha$ and $n_0$. It suffices to show that 
\begin{align}
q = {x}_1 * {\hat{x}}_2-{x}_2 * {\hat{x}}_1\neq 0, \quad \text{but} \quad Q(e^{\mathrm{j}\omega_0k}) = 0, \quad \forall k \in \mathcal{K}.\nonumber
\label{eq:necessary}
\end{align}
	
Before proceeding further, we define the following notations to prove the necessary part. Let $\mathbf{\bar{A}}$ denotes the $|\mathcal{K}|\times M_x$ matrix that consists of first $M_x$ columns of matrix $\mathbf{A}$. Next, let $\mathbf{x}_1$, $\mathbf{x}_2$, $\mathbf{\hat{x}}_1$, and $\mathbf{\hat{x}}_2$ denote $L$-sparse vectors in $\mathbb{C}^{M_x}$ that are constructed by considering the first $M_x$ values of the sequences $x_1$, $x_2$, $\hat{x}_1$, $\hat{x}_2$, respectively. 

Since the matrix $\mathbf{\bar{A}}$ has full spark, there exist distinct pairs of $L$-sparse vectors $(\mathbf{x}_1, \mathbf{\hat{x}}_1)$ and $(\mathbf{x}_2, \mathbf{\hat{x}}_2)$ such that $\mathbf{\bar{A}}\mathbf{x}_1 = \mathbf{\bar{A}}\mathbf{\hat{x}}_1$ and $\mathbf{\bar{A}}\mathbf{x}_2 = \mathbf{\bar{A}}\mathbf{\hat{x}}_2$ as long as $|\mathcal{K}|<2L$. Hence, we have that
\begin{align}
\mathbf{\bar{A}\mathbf{x}_1
 ./ \mathbf{\bar{A}}\mathbf{x}_2} = \mathbf{\bar{A}}\mathbf{\hat{x}}_1  ./ \mathbf{\bar{A}}\mathbf{\hat{x}}_2,
\end{align}
which is equivalent to \eqref{eq:partial_convNEW} or $Q(e^{\mathrm{j}\omega_0k}) = 0$ for all $k \in \mathcal{K}$. Since $\mathbf{x}_1 \neq \mathbf{\hat{x}}_1$ and $\mathbf{x}_2 \neq \mathbf{\hat{x}}_2$, we also have that corresponding sequence $q \neq 0$. Hence, for $|\mathcal{K}|<2L$ the problem in \eqref{eq:optNEW} does not have unique solution for all the filter pairs $(x_1, x_2)$.

\section{Conclusions}	
In this paper, we derived identifiability conditions for multichannel blind deconvolution when the filters follow a fully deterministic sparsity model and the source has finite support. We showed that when there exist at least a pair of two mutually coprime filters, it is sufficient to take $2L^2$ Fourier measurements from those channels for the unique identification of the filters. To identify the source uniquely, we derive conditions on the number of measurements and number of channels in terms of the support of the source and sparsity of the filters. The results improve upon existing MBD results both in terms of the number of measurements and the number of channels required for unique identifiability and also apply to the non-sparse settings. 
	
\ifCLASSOPTIONcaptionsoff
\newpage
\fi
	
\bibliographystyle{IEEEtran}
\bibliography{refs,refs2}
	
\end{document}